\newcommand{\ket}[1]{|#1\rangle}
\newcommand{\bra}[1]{\langle#1|}
\newtheoremstyle{mystyle}%
{3pt}
{3pt}
{\upshape}
{}
{\bfseries}
{.}
{.5em}
{}
\newcommand{\oo}{00\ldots0}
\newcommand{\po}{p_{\oo}}
\newcommand{\sP}{{\#\sf{P}}}
\crefname{section}{Section}{Section} 
\crefname{subsection}{Section}{Section}
\theoremstyle{mystyle}
\crefname{thm}{Theorem}{Theorems}
\Crefname{thm}{Theorem}{Theorems}
\newtheorem{thm}{Theorem}
\newtheorem{lemma}{Lemma}
\newtheorem{corll}{Corollary}
\crefname{corll}{Corollary}{Corollaries}
\theoremstyle{remark}
\theoremstyle{definition}
\newtheorem{defn}{Definition}
\theoremstyle{mystyle}
\begin{document}

\title{Tight bounds on the convergence of noisy random circuits to the uniform distribution}
\author{Abhinav Deshpande}
\affiliation{Joint Center for Quantum Information and Computer Science and Joint Quantum Institute, NIST/University of Maryland, College Park, Maryland 20742, USA}
\affiliation{Institute for Quantum Information and Matter, Caltech, Pasadena, California 91125, USA}

\author{Pradeep Niroula}
\affiliation{Joint Center for Quantum Information and Computer Science and Joint Quantum Institute, NIST/University of Maryland, College Park, Maryland 20742, USA}

\author{Oles Shtanko}
\thanks{Currently at IBM Quantum, MIT-IBM Watson AI Lab, Cambridge MA, 02139 US}
\affiliation{Joint Center for Quantum Information and Computer Science and Joint Quantum Institute, NIST/University of Maryland, College Park, Maryland 20742, USA}

\author{Alexey V. Gorshkov}
\affiliation{Joint Center for Quantum Information and Computer Science and Joint Quantum Institute, NIST/University of Maryland, College Park, Maryland 20742, USA}

\author{Bill Fefferman}
\affiliation{Department of Computer Science, University of Chicago, Chicago, Illinois 60637, USA}

\author{Michael J. Gullans}
\affiliation{Joint Center for Quantum Information and Computer Science and Joint Quantum Institute, NIST/University of Maryland, College Park, Maryland 20742, USA}

\begin{abstract}
We study the properties of output distributions of noisy, random circuits.
We obtain upper and lower bounds on the expected distance of the output distribution from the ``useless'' uniform distribution.
These bounds are tight with respect to the dependence on circuit depth.
Our proof techniques also allow us to make statements about the presence or absence of anticoncentration for both noisy and noiseless circuits.
We uncover a number of interesting consequences for hardness proofs of sampling schemes that aim to show a quantum computational advantage over classical computation.
Specifically, we discuss recent barrier results for depth-agnostic and/or noise-agnostic proof techniques.
We show that in certain depth regimes, noise-agnostic proof techniques might still work in order to prove an often-conjectured claim in the literature on quantum computational advantage, contrary to what was thought prior to this work.
\end{abstract}

\maketitle

\section{Introduction}
Noise is an unavoidable part of any quantum computing experiment today.
The importance of considering the limitations of noisy quantum computers is most palpable in the coinage and the popularity of the term \emph{noisy, intermediate-scale quantum (NISQ)} computers \cite{Preskill2018}.
Because of these limitations, the study of quantum algorithms and of their robustness to noise is a problem at the forefront of quantum information science today.
With regard to the aim of outperforming classical computers at solving computational problems, two obstacles are noise and limited system size.
There is a tradeoff between the two, since it is generally challenging to realize a quantum computation with both large enough number of qubits and low enough error rates.
This is the reason why recent demonstrations of ``quantum computational advantage'' \cite{Arute2019,Zhong2020,Zhong2021,Wu2021} have been rightly hailed as exciting developments.
It is of prime importance in the field of quantum information today to study the tradeoff between system size and noise from the viewpoint of whether a given experiment is indeed efficiently simulable on a classical computer or not.

Separately, in recent years, the study of random quantum circuits has seen renewed vigor, because of their ability to model chaotic \cite{Brown2012,Brown2015} and complex \cite{Brandao2021,Haferkamp2021a} quantum dynamics and the ability to study certain universal properties of subclasses of random circuits using methods from statistical physics \cite{Potter2021}.
Indeed, random circuits and random ans\"atze sometimes inform the modeling of near-term variational quantum algorithms, an example being the barren-plateau problem \cite{McClean2018,Wang2021}.

In this work, we study various properties of noisy random circuits relating to their rate of convergence to the uniform distribution.
Specifically, we study circuits of depth $d$ on $n$ qubits with Haar-random two-qubit gates and local Pauli noise, with measurements in the computational basis at the output.
The precise rate of convergence of the resulting output distribution to the uniform distribution is a question of much significance in the complexity of random circuit sampling \cite{Gao2018,Bouland2021}, the theory of benchmarking noisy circuits \cite{Aharonov1996,Aharonov2000,Emerson2005,Boixo2018,Boixo2017,Liu2021}, and the investigation of near-term algorithms \cite{StilckFranca2021,Wang2021,Wang2021a}.
We prove upper and lower bounds on the expected total variation distance $\delta$ of the output distribution (when measuring in the computational basis) to the uniform distribution, which take the form $\delta \sim \exp[-\tilde{\Theta}(d)]$
(see note\footnote{In this paper, we use the symbols $O$, $\Omega$, $\Theta$, $o$, and $\omega$ to denote various relations between the asymptotic scaling of two nonnegative functions $f(n)$ and $g(n)$, studied in the limit $n \to \infty$.
We denote $f=O(g)$ if $\lim_{n\to \infty} f(n)/g(n) < \infty$, which is also equivalent to the relation $g = \Omega(f)$.
Also, if $\lim_{n\to \infty} f(n)/g(n) = 0$, we have $f=o(g) \Leftrightarrow g = \omega(f)$.
Lastly, we say $f=\Theta(g)$ if both $f=O(g)$ and $f=\Omega(g)$ hold.
Symbols with a tilde, such as $\tilde{\Theta}$, suppresses potential polylogarithmic factors of $n$.
We also use the asymptotic notation in conjunction with the negative sign in the following sense: $-\Omega(f)$ denotes the set of all functions $g$ such that $-g(n) = \Omega(f(n))$, or $\lim_{n\to \infty} \frac{f}{-g} < \infty$.
Therefore, $g(n)= -n^2$ satisfies $g(n) = -\Omega(n)$.}).
These bounds are tight with respect to the scaling with $d$, in the sense that the exponent scales linearly with $d$.
We also study a property known as anticoncentration in noisy and noiseless random circuits.
Anticoncentration is a measure of ``flatness'' of the output distribution, which is why our results on the closeness to the uniform distribution inform anticoncentration properties.
Anticoncentration has been cited as crucial for output distributions to be classically hard to sample from \cite{Aaronson2013a,Harrow2017,Harrow2018} in the literature on quantum advantage via sampling tasks.

We first briefly describe our main results and their consequences.
\begin{itemize}
\item We prove a lower bound on the expected total variation distance of the output distribution from the uniform distribution for local Pauli noise, denoted $\mathbb{E}_\mathcal{B}[\delta]$ (\cref{thm_lowerbound_vardist})\footnote{The expectation value $\mathbb{E}_\mathcal{B}$ here is over the choice of the random unitary gates.}.
This takes the form $\mathbb{E}_\mathcal{B}[\delta] \geq \exp[-O(d)]$.
\item We prove an upper bound on the above quantity for the case of a stochastic Pauli noise channel we call \emph{heralded dephasing} (\cref{thm_upperbound_vardist}).
The upper bound takes the form $\mathbb{E}_\mathcal{B}[\delta] \leq \poly(n)\exp[-\Omega(d)]$.
\item We also study anticoncentration properties of noisy and noiseless random circuits.
We show that at sublogarithmic depth, there is a severe lack of anticoncentration, strengthening the results of \citet{Dalzell2020} (\cref{thm_no_anticonc}).
\item We complement the above result by showing that noisy random circuits with local Pauli noise do anticoncentrate at higher depth, since they anticoncentrate at least as fast as noiseless random circuits (\cref{thm_noisy_anticonc}).
\item As a result of independent interest, we develop a mapping between noisy random circuits and a model in statistical mechanics in the context of proving our results in \cref{apx_upperbound}.
This model builds upon tools invented in the study of  random circuits \cite{Nahum2017,Nahum2018,Hunter-Jones2019,Dalzell2020,Barak2020}.
\end{itemize}

Our results have important consequences for the tightness of proof techniques in the field of quantum computational advantage, which we discuss below.
Finally, we comment on the relation of our work to recently obtained results by \citet{Dalzell2021}, where bounds on the rate of convergence to the uniform distribution for noisy random circuits were obtained.  That work considers a low-noise limit where the local probability of error in each circuit location (denoted as $\epsilon$) satisfies the scaling  $\epsilon(n) n \to 0$ as the number of qubits $n$ tends to infinity.
Under these conditions, they are able to recover the scaling $\mathbb{E}_\mathcal{B}[\delta] \sim \exp[-\tilde{\Theta}(nd)]$ observed in small-scale numerics \cite{Boixo2017,Boixo2018}.
In contrast, we consider a more physically natural scaling limit where the noise rate stays constant in the large-$n$ limit.
In this case, the analysis of Ref.~\cite{Dalzell2021} breaks down.
On the other hand, our bounds on the convergence rate to uniformity apply in the limit of  vanishing noise rates, but the upper bound becomes uninformative.
Thus, the two sets of results provide  complementary and, apparently, largely non-overlapping insights into the behavior of noisy quantum circuits on finite systems.

\section{Consequences}

\subsection{Barriers on proof techniques in the complexity theory of random circuit sampling}
We now elaborate more on the connection between the hardness of sampling and the hardness of computing output probabilities for random quantum circuits.
There has been an effort in the literature to prove, under a reasonable complexity assumption, that approximately sampling from the output distribution of random quantum circuits is classically hard.
In order to prove this statement, it suffices to prove that approximating an output probability $\po := \abs{\bra{\oo}U\ket{\oo}}^2$ of an $n$-qubit random quantum circuit $U$ is hard on average \cite{Aaronson2013a,Bremner2016}.
More specifically, proving that $\po$ is $\sP$-hard to compute to within imprecision (measured in terms of the additive error) $2^{-n}/\poly(n)$ for some polynomial $\poly(n)$ would give the desired claim that approximately sampling from the target distribution to within a small imprecision is classically hard.
The target imprecision of $2^{-n}$ arises from the fact that the Hilbert space dimension is $2^n$ and a typical output probability of a given bitstring is $\sim 2^{-n}$.

The state-of-the-art results \cite{Krovi2022,Bouland2021,Kondo2021} on the average-case hardness of computing output probabilities of quantum circuits come close to proving the desired result in a certain sense.
The ``closeness'' is measured in terms of the largest imprecision to which computing the output probability $\po$ of a random circuit is still hard on average.
The state-of-the-art results prove that computing $\po$ is hard to within a smaller imprecision of $ 2^{-\Theta(n d)}$, matching the required imprecision of $2^{-n}/\poly(n)$ when $d$ is a constant.
These results improve upon prior results \cite{Bouland2019,Movassagh2019} that proved hardness with imprecision $ 2^{-\poly(n)}$.
Such results are often viewed as evidence for the conjecture that $\po$ is hard to compute on average to a much larger imprecision of $2^{-n}/\poly(n)$.

\subsubsection{Shallow depth random circuits}
In Ref.~\cite{Napp2019}, the authors gave important no-go results for proving the desired result, i.e.\ the average-case hardness of computing $\po$ to an imprecision $ 2^{-n}/\poly(n)$ for a specific class of constant-depth random circuits.
Specifically, they showed that it is in fact classically easy to compute $\po$ to within this much imprecision, even when previous techniques implied that computing $\po$ to within a much smaller imprecision of $2^{-\poly(n)}$ is average-case hard.
Therefore, these results mean that one cannot, in general, view the hardness of computing $\po$ to a smaller imprecision as evidence for the hardness of computing $\po$ to a larger imprecision.
In other words, these results constitute a \emph{barrier} for any technique purporting to prove the desired average-case hardness result for general quantum circuits.
Any such technique must necessarily be sensitive to the depth of the circuit, otherwise it would work for constant-depth circuits of the sort studied in Ref.~\cite{Napp2019} and contradict their easiness results.
Barrier results such as this are useful because they rule out certain proof techniques and guide the search for a proof technique resistant to these barriers.
In this case, the barrier result informs us about depth-sensitivity of a proof technique.

\subsubsection{Noisy random circuits}
A second barrier was identified by \citet{Bouland2021} concerning the issue of noise.
The authors showed that existing hardness proof techniques were applicable to noisy random circuits as well, to yield hardness of computing a noisy output probability $\po$ to small imprecision.
Contrastingly, for the slightly larger imprecision of $ 2^{-n}/\poly(n)$, it is known to be easy to compute output probabilities, since the output distribution in the presence of noise is believed to converge rapidly to the uniform distribution \cite{Aharonov1996,Boixo2018,Boixo2017}.
An algorithm that always outputs ``$1/2^n$'' successfully computes $\po$ to within the required imprecision.
Therefore, the results of Ref.~\cite{Bouland2021} exhibit another barrier for hardness proof techniques purporting to work with higher imprecision---these techniques must distinguish between noiseless and noisy random circuits.

\begin{figure*}[t!]
\centering
\includegraphics[width=\linewidth]{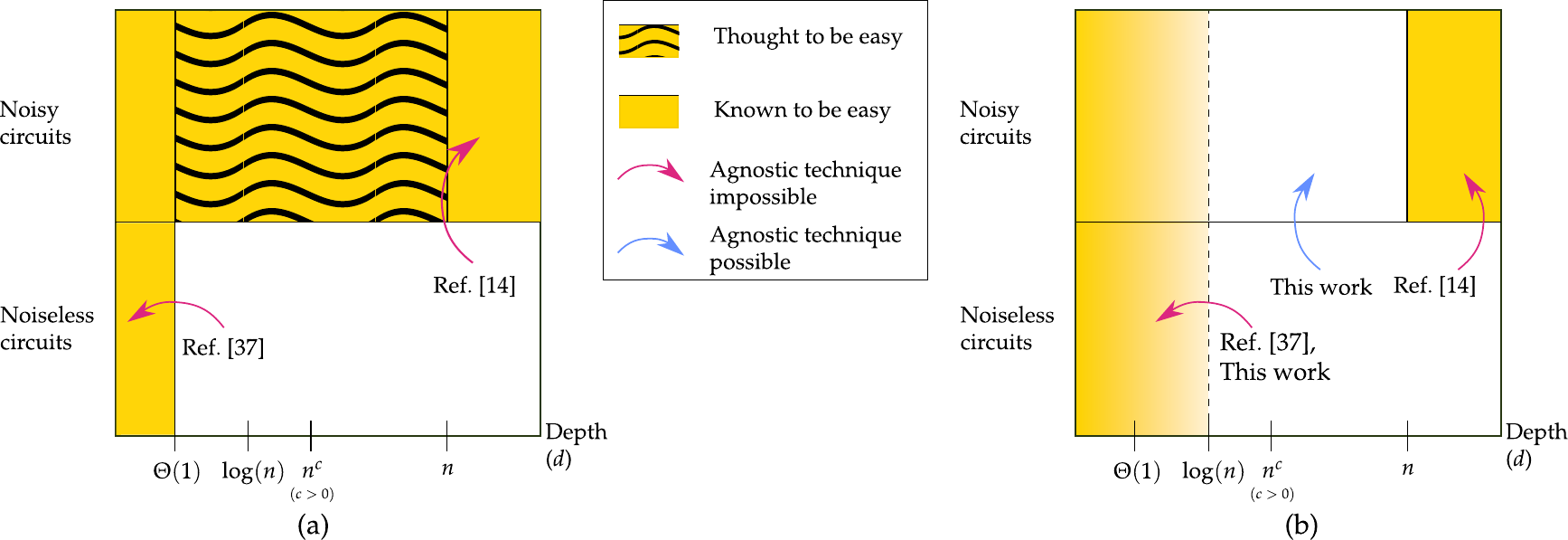} \label{fig_before_after}
\caption{
Status of hardness of approximating random output probabilities to within $2^{-an}$ imprecision for any constant $a>0$, (a) before and (b) after our work, for noisy and noiseless circuits at various depths.
The results in both (a) and (b) are shown for two dimensions\footnote{The results in \cref{fig_before_after}(b) remain the same in other dimensions.}.
The solid yellow regions correspond to the regions where the problem is known to be easy (for all $a$), while the dashed yellow regions are those where the problem was thought to be easy.
The red arrows imply that a technique to show hardness that is either noise-agnostic or depth-agnostic will fail because of the easy region, while the blue arrow indicates that such a technique is not ruled out.
The references next to the arrow refer to the works that discuss the presence or absence of a barrier.
In (a), the entire shaded region for noisy circuits follows from the assumption that noisy circuits converge to uniformity at a rate $2^{-\Theta(nd)}$.
Since we disprove this possibility in \cref{thm_lowerbound_vardist}, this region is smaller in (b) and allows for a noise-agnostic technique in the regime of $\Theta(\log n) \leq d \leq \Theta(n)$.
At large depths, noisy circuits continue to be easy due to their convergence to uniform, as strengthened by \cref{thm_upperbound_vardist}.
We also extend the easy region at shallow depths from constant to $o(\log n)$ by virtue of \cref{thm_no_anticonc}.
Finally, \cref{thm_noisy_anticonc} implies that the trivial algorithm of outputting 0 for noisy circuits stops working after depth $\Omega(\log n)$.
}
\end{figure*}

\subsubsection{Consequences of our results on tightness of proof techniques}
\citet{Bouland2021} also showed that existing noise-agnostic techniques for proving average-case hardness are almost tight.
The logic is that current techniques prove average-case hardness for imprecision $2^{-\Omega(nd \log nd)}$ or smaller.
On the other hand, assuming that noisy random circuits become $2^{-\Omega(n d)}$-close in total variation distance to the uniform distribution, as suggested by small-scale numerics \cite{Boixo2018,Boixo2017}, would mean that it is average-case easy to approximate output probabilities to within imprecision $2^{-O(nd)}$ or larger.
Thus, knowing the convergence properties of noisy random circuits sheds light on the tightness of noise-agnostic techniques.

We now discuss implications of our results on these aforementioned barrier results.
First, \cref{thm_no_anticonc} casts more light on the depth barrier result in Ref.~\cite{Napp2019}.
\citet{Napp2019} prove that for certain short depths $d\leq 3$ on certain architectures, approximating output probabilities to additive error $2^{-n}$ is easy on average.
This constitutes a barrier for improving robustness of proof techniques since the same output probabilities are formally average-case hard to approximate to within error $2^{-O(n \log n)}$.
Our \cref{thm_no_anticonc} states that for any sublogarithmic depth $d=o(\log n)$, most output probabilities are at most $2^{-n} \times 2^{-\Theta(n)} = o(2^{-n})$.
This indicates a severe lack of anticoncentration, a property that has been linked to the classical hardness of sampling from output distributions of random quantum circuits \cite{Aaronson2013a,Harrow2017,Harrow2018}.
Because of \cref{thm_no_anticonc}, a trivial algorithm that always guesses ``0'' as the output probability turns out to work to within imprecision $\leq 2^{-n}$ with high probability.
Therefore, we conclude that any technique to show the conjecture on average-case hardness to within imprecision $2^{-\Theta(n)}$ must not work at depth $d=o(\log n)$.
This extends the previously proven regime of $d \leq 3$ \cite{Napp2019}\footnote{\citet{Napp2019} also had non-rigorous arguments that appear to give convincing results for depths larger than 3 but smaller than some architecture-dependent constant.}.

Regarding the noise barrier, our result, namely \cref{thm_lowerbound_vardist}, shows that current noise-agnostic techniques may yet be improved.
This is because this theorem disproves the hypothesis that the total variation distance to uniformity follows $\mathbb{E}_\mathcal{B}[\delta] \leq \exp[-\Theta(nd)]$ in the asymptotic limit.
As mentioned earlier, this latter scaling was numerically observed at small system sizes  \cite{Boixo2017,Boixo2018} and hypothesized to hold asymptotically.
Since we show that the distance to uniform distribution behaves as $2^{-\tilde{\Theta}(d)}$, there is some scope for improving current noise-agnostic techniques.
Moreover, we also show in \cref{cor:typical_lower_bound} that, as long as the depth satisfies $d\leq c\log n$ for some constant $c$, the trivial algorithm ``output $1/2^n$'' fails with high probability.
This is achieved by a strengthening of \cref{thm_lowerbound_vardist}, where, in addition to a lower bound on the expectation value of the total variation distance $\mathbb{E}_\mathcal{B}[\delta]$, we show that this lower bound is \emph{typical}.

To summarize, noise-agnostic techniques can only work in the regime $d = \Omega(\log n)$, $d=O(n)$.
In fact, in this regime, the possibility of being able to prove hardness results for both noisy and noiseless circuits to imprecision $2^{-O(n)}$ is not ruled out (see \cref{fig_before_after}).

\subsubsection{Surmounting barriers}
Owing to this work, we have now identified a map of parameter regimes  showing where a quantum advantage may be obtained over classical computers and where depth- and noise-agnostic techniques can be used to prove the conjecture on average-case hardness of computing output probabilities to within imprecision at most $O(2^{-n})$.
The limitations are explored via two trivial algorithms for computing output probabilities---one always outputting $1/2^n$ and the other always outputting $0$.
We now speculate on how one might avoid, or surmount, these barrier results.

One possible lesson to glean from these results is that we need fundamentally new techniques to prove average-case hardness of computing output probabilities.
Indeed, all known average-case hardness results rely on polynomial interpolation, which was pioneered by \citet{Lipton1989}.
In the context of quantum advantage, all known proofs of average-case hardness \cite{Bouland2019,Movassagh2019,Bouland2021,Kondo2021,Krovi2022} construct a univariate polynomial in a specific way by perturbing each of the $\Theta(nd)$ gates of the random circuit and using the Feynman path integral.
This unavoidably leads to the imprecision depending on the combination $nd$ rather than on just $n$.
An alternative way of constructing polynomials that also explicitly fails for noisy circuits could potentially avoid this drawback and lead to a better dependence of the imprecision.

A second interpretation of our results relies on the following observation.
In both known barrier results, the trivial algorithm for computing output probabilities is just an algorithm that outputs a fixed constant independent of the input, i.e.\ either 0 or $1/2^n$.
Suppose we want to prove a hardness result saying that no $\PH$ algorithm that depends nontrivially on the input successfully approximates $\po$ on average.
Let us call these algorithms ``instance-dependent''\footnote{There are simple ways to make algorithms depend on the instance; we are interested in an algorithm that depends \emph{nontrivially} on the instance.}.
By our results, there are no barriers to proving such a hardness result on instance-dependent algorithms.
Therefore, it might be possible to obtain hardness results that only rule out instance-dependent algorithms that do not output a fixed value.
Thus, by finding a suitable enough technique, it may be possible to surmount both the noise and depth barriers.
Note that the algorithm of \citet{Napp2019} does not output a fixed constant. Nevertheless, our result showing that there is an alternative instance-independent algorithm in this regime raises the interesting open question of whether their algorithm fits into the framework of instance-independent algorithms we have identified here.

\subsection{Benchmarking noise using random circuits}
Sampling from random quantum circuits is a leading proposal for demonstrating a quantum computational advantage over classical computers \cite{Arute2019,Boixo2018,Aaronson2017,Bouland2019,Aaronson2020b}.
Part of the reason behind the strength of this proposal is the success of the linear cross-entropy measure as a predictor of fidelity~\cite{Boixo2018,Liu2021,Choi2021}, (although there can be exceptions \cite{Gao2021}).
This is believed to be a major advantage of schemes based on random circuit sampling over schemes based on other quantum sampling problems such as boson sampling and Gaussian boson sampling.
In fact, the resources needed for experimental implementation of benchmarking based on random circuit sampling can scale better than the implementation of randomized benchmarking \cite{Liu2021}.
The effectiveness of measures such as cross-entropy at reflecting the fidelity is crucially related to how the noisy distribution behaves and how close it is to the identity.
For example, a crucial fact used in Ref.~\cite{Liu2021} is that the average fidelity is upper and lower bounded by an exponential function of the noise strength ($e^{- c\epsilon n d} \leq \mathbb{E}[F] \leq Ke^{-c\epsilon n d}$ for known constants $c, K$)\footnote{Note that \citet{Liu2021} write this in terms of $\lambda = n\epsilon$, the total noise acting on a layer of gates.}; therefore, estimating the decay rate of the fidelity serves to estimate the noise strength $\epsilon$.
However, like Ref.~\cite{Dalzell2021}, this result is only applicable to the regime of asymptotically small noise strength $\epsilon \ll 1$.

In the more prevalent and natural regime of $\epsilon = \Theta(1)$, it would be worthwhile to obtain even tighter results on the scaling of the total variation distance between the experimental and uniform distributions with respect to the circuit depth.
Our work is the first step in obtaining asymptotically correct decay rates.
We prove tight bounds (with respect to scaling with depth) of the form $e^{-c_1 \epsilon d} \leq \mathbb{E}_\mathcal{B}[\delta] \leq Ke^{-c_2\epsilon d}$ for different $c_1, c_2$.
These results might be useful to show a similar decay behavior for other proxies for the fidelity, such as the linear cross entropy, which is an interesting question for future work.
Improving these bounds to obtain the same constants $c_1=c_2$ for sample-efficient estimators of the fidelity would lead to applications in benchmarking noise based on random circuits.

\subsection{Near-term algorithms}
Our results also have important consequences for near-term variational quantum algorithms that can be modeled by random circuits.
Variational algorithms with parametrized circuits suffer from the problem of ``barren plateaus'', which affects trainability of the circuit due to the vanishing of gradients in the cost function landscape \cite{McClean2018}.
Previous work on noise-induced barren plateaus \cite{Wang2021} has shown that the gradient of local cost functions vanishes as $\poly(n)\exp[-\Omega(d)]$ in the presence of noise.
One possible workaround to avoid the barren-plateau problem would be to use cost functions that are not (sums of) local observables and instead rely on postprocessing the entire data at the output distribution.
However, since we give an information-theoretic proof that the output distribution is close to the uniform one, our results (specifically \cref{thm_upperbound_vardist}) imply that even these strategies cannot ameliorate the problem.

More optimistically, at short depths $d=O(\log n)$, \cref{thm_lowerbound_vardist} implies that there is enough information content in the output distribution for circuit trainability.
We thus avoid a pessimistic conclusion that even constant-depth noisy circuits are untrainable, a conclusion that would have followed from the incorrect hypothesis that noisy circuits converge to uniformity at a rate $2^{-\Theta(nd)}$.

\subsection{Monitored random circuits and entanglement phase transitions} \label{sec_monitoredrqc}
As mentioned above, we study the upper bounds on the expected total variation distance to the uniform distribution for a noise channel called heralded dephasing.
Heralded dephasing corresponds exactly to the situation where, after each layer of the circuit, a random fraction $p$ of qubits is measured.
This is the model of monitored random circuits, or random circuits with intermediate measurements, that have been studied in the many-body-physics literature \cite{Aharonov2000,YungerHalpern2019,Chan2019,Li2018,Skinner2019,Gullans2020a,Potter2021}.

These models feature entanglement and purity phase transitions as a function of measurement strength (the probability $p$ of measuring a qubit at a given time), exhibiting a volume-law behavior of entanglement entropy of the final pure state for $p < p_c$, and an area-law behavior above the critical point ($p > p_c$).
These transitions have also been linked to computational complexity phase transitions \cite{Napp2019,Vijay2020,Bao2022}.
One can study the computational complexity of sampling from the output distribution of the random circuit provided knowledge of the measurement locations.
This problem can be studied both in the setting where the results of the intermediate measurements are known or unknown.
The case where the results of the intermediate measurements are unknown corresponds to heralded dephasing.
As a consequence of our work, we conclude that for the problem of approximate sampling from the output distribution of monitored random circuits with unknown intermediate measurement results, the problem becomes easy for classical computers for any depth $d = \omega(\log n)$ and any $p>0$.
Therefore, there is no complexity transition in this setting.
It remains open whether there is a complexity transition in the setting where the intermediate measurement results are also known.

\section{Prior work}
In this section, we will summarize prior work.
The foundational work by \citet{Aharonov1996} dealt with the convergence of states evolving under arbitrary quantum circuits with depolarizing noise to the maximally mixed state.
They showed that the entropy of the system reaches its maximal value exponentially fast with $d$.
They concluded that noisy circuits (without error correction) are essentially ``worthless'' after logarithmic depth $d=\Omega(\log n)$.
Translated to our setting, their proof techniques imply that the total variation distance $\delta$ to the uniform distribution satisfies $\delta \leq 2^{-\Omega(d)}$.
Depolarizing noise is also studied in more detail in Refs.~\cite{Muller-Hermes2016,Hirche2020}.
\citet{Ben-Or2013} generalized the work of \citet{Aharonov1996} to other forms of  noise and observed that the time after which noisy circuits are worthless depends on the class of channels.

More recently, \citet{Gao2018} studied the case of circuits with a more generalized form of Pauli noise and showed an upper bound on the distance to the uniform distribution of the form $\mathbb{E}[\delta] = \exp[-\Omega(d)]$.
This result, however, does not work for the case of dephasing noise.
In fact, the upper bound for dephasing noise is a constant independent of $n$ and $d$, which is not informative.
Moreover, the result explicitly assumes a property of random quantum circuits known as anticoncentration.

Note that, for dephasing noise, it is not possible to prove a general upper bound on $\delta$ that works for arbitrary circuits by using the techniques of Ref.~\cite{Gao2018}.
This is because a state in the computational basis is unaffected by dephasing, and hence there are instances (for example, circuits with diagonal gates in the computational basis) where the state is always in the computational basis throughout the evolution and the output distribution is unaffected by the noise channels.
Techniques like the Pauli twirl do not work either since we are interested in a quantity that is not a linear observable of $\rho$.

Prior work on anticoncentration has mostly been via second-moment bounds on the output probabilities \cite{Bremner2016,Harrow2018,Hangleiter2018,Barak2020,Dalzell2020}, which can be proved via the design property of random circuits \cite{Harrow2009a,Brandao2016,Harrow2018}.
An exception is the case of the one-clean qubit model (DQC1) \cite{Morimae2017b}.
A tool to analyze random circuits that has proven particularly fruitful is that of mapping to models in statistical mechanics.
This method has been successfully used in prior work; see, for example, \cite{Nahum2017,Nahum2018,Hunter-Jones2019,Dalzell2020,Barak2020}, among others.

\section{Definitions}
We define a depth-$d$ noisy  circuit on $n$ qubits as a sequence of quantum channels
\begin{align}
\mathcal{N}_d(\rho) = (\mathcal{E}_d \circ \mathcal{C}_d \cdots \mathcal{E}_1 \circ \mathcal{C}_1) (\rho),
\end{align}
where $\mathcal{C}_m(\rho) = U_m \rho U_m^\dag$ is a unitary operation on $n$ qubits and $\mathcal{E}_m(\rho)$ is a noise channel.
A noisy Haar random circuit is a noisy circuit for which each $U_m$ is decomposable into tensor products of two-qubit Haar random gates.

The results in our work are subject to different assumptions on the noise model.
We will always work with \emph{local} noise, where the channels $\mathcal{E}_m$ act on at most $k=\Theta(1)$ many qubits at a time.
We call a noise channel \emph{stochastic} if it can be written in Kraus form where at least one Kraus operator is a positive multiple of the identity.
\emph{Pauli} noise is described in general by $\mathcal{E}_m(\rho) = \sum_{E \in \mathcal{P}} p_m(E) E \rho E^\dag$, where $\mathcal{P}$ is the Pauli group on $n$ qubits and $p_m(E)$ is a probability distribution over the Pauli group elements.
The local noise rate in Pauli sector $\mu \in \{I, X, Y, Z\}$ at depth $m$ on site $i$  is the marginal distribution $q_{\mu mi}= \sum_{E \in \mathcal{P},E_i = \mu} p_m(E)$.
Most results in this work are applicable to Pauli noise channels, though we will sometimes discuss what other noise channels the proof techniques could be applicable to.
It is not too restrictive to consider Pauli noise channels since in practice one may use strategies like randomized compilation \cite{Wallman2016} to implement random circuit sampling or benchmarking based on random circuit sampling.
It is proved in Ref.~\cite{Wallman2016} that this technique tailors correlated and even coherent noise into stochastic Pauli noise.

We consider circuit architectures where, for simplicity, the gates are applied in parallel.
More formally, we define a \emph{parallel architecture} as one for which $n$ is an even number and every qubit is involved in a two-qubit gate at every unit of depth\footnote{We do not constrain the ``grouping'' of these qubits. In fact, this grouping can change from layer to layer.}.
Our results can be easily extended to more general gate layering strategies with a suitable redefinition of the depth.
For a given site $i$, we define $n_m(i)$ as the set of \emph{neighbors} of $i$, which are the sites involved in a two-qubit gate at circuit layer $m$ with $i$ including $i$ itself.
We extend $n_m$ to all subsets of sites $A \subset \{1,\ldots, n\}$ through the composition rule $n_m(A \cup B) = n_m(A) \cup n_m(B)$.
We define the \emph{forward} and \emph{backward} \emph{lightcones} $L_d(i)$ and $L_d^\dagger(i)$, respectively, of site $i$ at depth $d$ as the sets
\begin{align}
L_d(i) &= n_d\circ \cdots \circ n_1(i),\\
    L_d^\dagger(i) &=n_1 \circ \cdots \circ n_d(i).
\end{align}
In general, we have the bounds $|L_d(i)|, |L_d^\dagger(i)| \le 2^d$.

We denote the above ensemble of noisy parallel circuits as $\mathcal{B}$, and our results are generally stated in terms of expectation values over this ensemble, denoted $\mathbb{E}_\mathcal{B}$.
The ensemble $\mathcal{B}$ describes the distribution over unitaries only and not over the possible noise operations.
We also denote $\delta = \norm{\mathcal{D}- \mathcal{U}}_\mathrm{TVD}$, where $\mathcal{D}$ denotes the (noisy) output distribution when measuring in the computational basis, $\mathcal{U}$ is the uniform distribution over $n$-bit strings and ``TVD'' denotes total variation distance.
A central consequence of our results relates to anticoncentration properties of low-depth noisy random circuits.
Here, we provide a formal definition.
In the following, we denote by $\po$ the output probability of the string $\oo$.
For noiseless circuits, we have $\po =\abs{\bra{\oo} U \ket{\oo}}^2 $, while for noisy circuits it is given by $\po = \Tr[\ketbra{\oo} \mathcal{N}_d(\ketbra{\oo})]$.
\begin{defn} \label{def_antconc}
A family of random circuit ensembles is anticoncentrated if there exist constants $\alpha \in (0,1]$ and $c >0$ such that
\begin{align}
 \Pr_\mathcal{B}\left[\po
\geq \frac{\alpha}{2^n} \right] \geq c.
\end{align}  
\end{defn}
A stronger definition of anticoncentration is often given in terms of the \emph{collision probability}, which we define for noisy circuits as
\begin{equation}
    Z(U,\mathcal{E}) = \sum_{x\in \{0,1\}^n} p_x^2. \label{eq_collisionprob}
\end{equation}
We denote the collision probability for noiseless circuits as $Z(U, \mathbb{I})$.
The strong definition of anticoncentration requires that $2^n \mathbb{E}_\mathcal{B}[Z(U,\mathcal{E})] \leq c$ for some constant $c>0$ \cite{Barak2020,Dalzell2020}\footnote{Note, for a Haar random unitary $\mathbb{E}_\mathcal{U}  Z(U,\mathbb{I})= 2/(2^n+1)$.}.
It follows from a standard argument provided in the proof of \cref{thm_noisy_anticonc}, that, under this stronger definition of anticoncentration, the random circuit ensemble also satisfies \cref{def_antconc}; however,  the converse does not hold.

\section{Results}

\subsection{Lower bound on the distance to the uniform distribution}
\begin{thm} \label{thm_lowerbound_vardist}
For a Haar-random circuit on any parallel circuit architecture subject to local Pauli noise with a uniform upper bound on the local noise rate $q_{\mu mi} \leq q_\mu$ for all Pauli noise sectors $\mu$, $m$ and $i$, we have the lower bound
\begin{align}
\mathbb{E}_\mathcal{B}[ \delta] \geq \frac{(1-2b)^{2d}}{4\cdot 30^d},
\end{align}
where $b = \min[q_x + q_y, q_y+q_z, q_z+q_x] \le 1/2$.
\end{thm}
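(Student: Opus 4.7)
The plan is to reduce the TVD lower bound to the expected squared magnitude of a single diagonal Pauli expectation value, and then lower-bound this second moment using the two-replica stat-mech mapping developed for the upper bound of this paper. For any nontrivial $s\in\{I,Z\}^n$, the test function $f_s(x)=\tfrac{1}{2}(1+(-1)^{x\cdot s})\in[0,1]$ witnesses $|\mathbb{E}_p f_s-\mathbb{E}_u f_s|=\tfrac{1}{2}|\Tr[s\mathcal{N}_d(\rho)]|$, and since $2\delta$ upper-bounds this difference, $\delta\geq\tfrac{1}{4}|\Tr[s\mathcal{N}_d(\rho)]|$. Since Pauli expectations lie in $[-1,1]$, the pointwise inequality $|X|\geq X^2$ yields
\begin{equation*}
\mathbb{E}_\mathcal{B}[\delta]\geq \tfrac{1}{4}\,\mathbb{E}_\mathcal{B}\!\left[(\Tr[s\mathcal{N}_d(\rho)])^2\right]
\end{equation*}
for any fixed diagonal Pauli $s$. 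The problem reduces to lower-bounding this 2-replica expectation for a well-chosen $s$.

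Next, I would rewrite the 2-replica expectation as $\Tr[(s\otimes s)\,\mathbb{E}_\mathcal{B}[\mathcal{N}_d^{\otimes 2}(\rho^{\otimes 2})]]$ and exploit the standard reduction: the Haar average of each 2-qubit gate on two copies is supported on the 2-dimensional span of the permutation operators $\{\mathbb{I}^{\otimes 2},\text{SWAP}^{\otimes 2}\}$ (Weingarten formula for $U(4)$), while each Pauli-noise layer acts diagonally on Pauli basis elements with eigenvalues whose magnitudes are at most $1-2b$ on the nontrivial sectors (by the definition of $b$). Together these ingredients map the 2-replica computation to the partition function of a 2-state classical statistical mechanics model on the circuit space-time lattice, with local Boltzmann weights from the Weingarten entries and the noise Pauli eigenvalues---precisely the stat-mech mapping developed for the upper bound in \cref{apx_upperbound}.

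The final step is to take $s=Z_i$ for any fixed qubit $i$ and lower-bound the resulting partition function by restricting to a single favorable Ising-like configuration on the backward lightcone $L_d^\dagger(i)$. The Boltzmann weight of this configuration factorizes layer-wise: each layer contributes a noise factor of at least $(1-2b)^2$ (one $(1-2b)$ per replica, for the best-preserved Pauli sector populated by the Haar spreading of the backward-evolved Pauli) and a Weingarten combinatorial factor at least $1/30$, with the constant $30$ arising from the $U(4)$ Weingarten entries and configuration multiplicities. Multiplying over $d$ layers gives $\mathbb{E}_\mathcal{B}[(\Tr[Z_i\mathcal{N}_d(\rho)])^2]\geq(1-2b)^{2d}/30^d$, which combined with the reduction above yields the claimed bound. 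The main obstacle is this partition-function estimate: because the Weingarten matrix has signed off-diagonal entries, the trial configuration must be chosen so the Boltzmann weight stays manifestly positive and so the surrounding configurations do not conspire to cancel the leading term, and the universal constant $30$ is somewhat loose but engineered to absorb these subtleties uniformly over any valid parallel gate-layering strategy.
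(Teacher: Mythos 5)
Your opening reduction is sound and matches the paper's: taking the marginal on one qubit gives $\delta \ge |p_0-\tfrac12| = \tfrac12|\Tr[Z_i\,\mathcal{N}_d(\rho)]|$, and squaring reduces the problem to lower-bounding the two-replica quantity $\mathbb{E}_\mathcal{B}\big[\Tr[Z_i\,\mathcal{N}_d(\rho)]^2\big]$. The gap is in how you propose to bound that second moment. Lower-bounding a Weingarten/stat-mech partition function by ``restricting to a single favorable configuration'' is only valid if every configuration contributes with a nonnegative weight, and here that fails at the very first step: the two-replica Haar average of the observable over the last gate touching site $i$ is
\begin{equation*}
\mathbb{E}_g\big[(g^\dagger)^{\otimes 2}(Z_i\otimes Z_i)\,g^{\otimes 2}\big]=\tfrac{1}{15}\big(4S-\mathbb{I}\big),
\end{equation*}
which already carries a negative coefficient on the identity configuration, and general Pauli noise layers introduce further signed matrix elements between the $I$ and $S$ sectors. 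You correctly identify this sign problem as ``the main obstacle,'' but you do not resolve it, and the constant $30$ is asserted rather than derived --- there is no natural $1/30$ among the $U(4)$ Weingarten data (the relevant transfer-matrix entries are $1$, $2/5$, and $0$). As written, the quantitative core of the theorem is missing.

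The paper's mechanism is different and sidesteps the sign problem entirely. Since $p_0^2-p_0+\tfrac14$ is a degree-2 polynomial in the output probability, the Haar average equals the average over random two-qubit Clifford gates, and one writes $\mathbb{E}_C[p_0(1-p_0)]\le \tfrac14-\epsilon^2\Pr_C(p_0\ge\tfrac12+\epsilon)$. The probability is then lower-bounded by the rare event that every one of the $d$ Clifford gates touching qubit $i$ maps $Z_i\mapsto Z_i$; a counting over the $11520$ two-qubit Cliffords shows this happens with probability at least $30^{-d}$ (this is where $30$ actually comes from). On that event the qubit never entangles with the rest of the system and its populations evolve only under the local noise, so \cref{lem_singlequbitnoise} gives $\epsilon=(1-2(q_x+q_y))^d/2$ exactly; the $\min$ defining $b$ is then obtained by conjugating with Hadamard or Hadamard-plus-phase at the boundary to select the best-preserved Pauli sector --- a step your sketch also omits, since the noise eigenvalue along the $Z$ sector is $1-2(q_x+q_y)$, not generically $1-2b$. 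To salvage your route you would need either an exact evaluation of the signed two-replica sum on the backward lightcone or a positivity argument for the relevant weights; absent that, the partition-function estimate at the heart of your proposal does not go through.
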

\begin{proof}

Let $p_x$ denote the probability of observing the bitstring $x$ at the output for a fixed circuit chosen from $\mathcal{B}$ and with a local noise channel applied after every gate.
For a region $A$, let $x_A$ be the value of $x$ restricted to the sites in $A$.
By definition, the total variation distance is the maximum difference in probabilities ascribed to any event $E$ by the two distributions, i.e.\ $\delta(\mathcal{D},\mathcal{U}) = \max_{E} [\abs{\Pr_\mathcal{D}[E] - \Pr_\mathcal{U}[E]}]$.
Considering the event that the first qubit is measured to be in the state $\ket{0}$, the 
total variation distance satisfies
\begin{align}
\delta \geq \abs{p_0 - \frac{1}{2}},
\end{align}
where $p_0$ and $p_1$ are the probabilities of observing a 0 and a 1 on the first qubit at the (noisy) output distribution, respectively.
Now since the above is a quantity in $[0,1]$, it satisfies
 $\abs{p_0 - \frac{1}{2}} \geq \abs{p_0 - \frac{1}{2}}^2$.
We now lower bound the expectation value of the latter quantity over parallel circuits, giving us $\mathbb{E}_\mathcal{B}[\delta] \geq \mathbb{E}_\mathcal{B}[p_0^2 - p_0 + 1/4]$.
Observe that for a given circuit,
\begin{align}
p_0 &= \Tr[\ketbra{0}_1 \mathcal{N}_d(\ketbra{\oo})]
\\ &= \sum_{x, x_1=0} \sum_{y, z, \ldots} \sum_{y', z', \ldots}\bra{x} V_d \ldots U_2 \ketbra{z} V_1 \ketbra{y} U_1  \ket{\oo} \nonumber \times
\\ & \bra{\oo} U_1^\dag \ketbra{y'} V_1^\dag \ketbra{z'} U_2^\dag \ldots V_d \ket{x}.
\end{align}
In the above, the gates $V_m$ are purifications of the noise channels $\mathcal{E}_m$ over ancillary qubits that are traced out.
The above equation is simply a Feynman-path integral representation of a noisy probability, from which it is evident that $p_0$ is a degree-two polynomial in the gate entries of the Haar-random gates.

This fact means that the expectation value $\mathbb{E}_\mathcal{B}[p_0^2 - p_0 + 1/4]$, which is over parallel circuits with local Haar-random gates, can be replaced by an expectation value over parallel circuits with local random Clifford gates $\mathbb{E}_C$ because of the 2-design property of the Clifford group.
Consider the quantity $\mathbb{E}_C[p_0^2 - p_0 + 1/4]$, which we reexpress as follows:
\begin{align}
\mathbb{E}_C[p_0^2 - p_0 + 1/4] = \frac{1}{4} - \mathbb{E}_C[p_0(1-p_0)].
\end{align}
We lower-bound this quantity by upper-bounding $\mathbb{E}_C[p_0(1-p_0)]$.
We define the probability density $f(p)$ through $f(p)\mathrm{d}p = \Pr_C(p_0 \in [p,p+\mathrm{d}p])$ and write
\begin{align}
\mathbb{E}_C[p_0(1-p_0)] &= \int_{0}^1 \mathrm{d}p\ f(p) p \cdot \left(1-p\right).  \label{eq_alternative}
\end{align}

Our overall strategy is as follows.
First, observe that, since the quantity $p(1-p)$ takes the maximum value $1/4$ in the interval $p \in [0,1]$, a crude upper bound for \cref{eq_alternative} is simply ${1}/{4} \times \int_0^1 \mathrm{d}p f(p) = {1}/{4}$.
This is a useless bound since it results in the conclusion $\mathbb{E}_\mathcal{B}[\delta] \geq 0$.
We refine this useless bound slightly by observing that at least some instances of the Clifford ensemble lead to a value of $p$ bounded away from $1/2$, implying that $p(1-p)$ is bounded away from $1/4$.
This will result in a better upper bound on $\mathbb{E}_C[p_0(1-p_0)]$, which will translate into a better lower bound on $\mathbb{E}_\mathcal{B}[\delta]$.

We split the integral in \cref{eq_alternative} into two parts: those with $p < \frac{1}{2} + \epsilon$ and those with $p \geq \frac{1}{2} + \epsilon$, for some $\epsilon \in (0, 1/2)$:
\begin{align}
\mathbb{E}_C[p_0 & (1-p_0)] =\int_0^{\frac{1}{2}+\epsilon} \mathrm{d}p f(p) p \cdot \left(1-p\right) \nonumber 
\\ & + \int_{\frac{1}{2}+\epsilon}^1 \mathrm{d}p f(p) p \cdot \left(1-p\right)
\nonumber
\\ & \leq \frac{1}{4} \int_0^{\frac{1}{2}+\epsilon} \mathrm{d}p f(p) + \left( \frac{1}{4} - \epsilon^2 \right) \int_{\frac{1}{2}+\epsilon}^1 \mathrm{d}p f(p).
\end{align}
In the above, we use the fact that, if $p \geq 1/2 + \epsilon$, then $p(1-p) \leq 1/4 - \epsilon^2$.
Continuing, we get
\begin{align}
\mathbb{E}_C[p_0(1-p_0)] &\leq \frac{1}{4}\left(1- \Pr_C\left(p_0 \geq \frac{1}{2} + \epsilon \right) \right) + \nonumber \\
\quad & \left( \frac{1}{4} - \epsilon^2 \right) \Pr_C\left(p_0 \geq \frac{1}{2} + \epsilon\right) 
\\ &= \frac{1}{4} - \epsilon^2 \Pr_C\left(p_0 \geq \frac{1}{2} + \epsilon \right).
\end{align}

It only remains to lower-bound $\Pr_C\left(p_0 \geq {1}/{2} + \epsilon \right)$.
For this we observe that we can take an extreme case over circuits satisfying a certain property.
As long as these circuits result in a final state with $p_0 \geq \frac{1}{2} + \epsilon$ and the likelihood of applying these (Clifford) circuits is large enough, we are done.
The extreme case we choose is simple: it consists of Clifford circuits where all of the $d$ two-qubit Clifford gates that touch the first qubit map the Pauli operator $Z_1$ to $Z_1$.
In particular, for Pauli noise channels this means that the first qubit is never entangled with any other qubit in the system and any mixture of the $\ketbra{0}$ and $\ketbra{1}$ states is unchanged by the unitary dynamics.
Effectively, the only evolution acting upon it is the noise channel after every layer.
Although these are  rare events, we will prove that they still lead to  a nontrivial lower bound  $\Pr_C\left(p_0 \geq {1}/{2} + \epsilon \right)$.

We will use \cref{lem_singlequbitnoise}, which analyzes this case and shows that for a single-qubit evolution under only Pauli noise, $\abs{p_0 - 1/2} = \left(1-2(q_x+q_y)\right)^d/2$.
This means that we can take $\epsilon = \left(1-2(q_x+q_y)\right)^d/2$.
Furthermore, the likelihood of observing this extreme event is lower-bounded away from 0. 
In each layer, the probability of applying a Clifford circuit with the property above is at least $1/30$ (this follows, for example, from a brute-force numerical evaluation for each of the 11520 two-qubit Clifford gates).
As a result, we obtain $\Pr_C(p_0 \geq 1/2 + \epsilon) \geq 1/30^d$ for $\epsilon = \left(1-2(q_x+q_y)\right)^d/2$.

Wrapping everything up, this results in
\begin{align}
\mathbb{E}_C[p_0(1-p_0)]  \leq \frac{1}{4}& - \frac{\left(1-2(q_x+q_y)\right)^{2d}/4}{30^d},
\\ \implies  \mathbb{E}_C[p_0^2 - p_0 + 1/4]  &\geq \frac{\left(1-2(q_x+q_y)\right)^{2d}}{4 \cdot 30^d} \\
\implies 
\mathbb{E}_\mathcal{B}[p_0^2 - p_0 + 1/4] &\geq \frac{\left(1-2(q_x+q_y)\right)^{2d}}{4 \cdot 30^d}, \label{eq:p0}
\\ \implies \mathbb{E}_\mathcal{B}[\delta] & \geq \frac{\left(1-2(q_x+q_y)\right)^{2d}}{4 \cdot 30^d}.
\end{align}
A comment is in order.
First, note that the dependence on $q_x + q_y$ can also be written as $q- q_z$, where $q=q_x+q_y+q_z$.
For the extreme event we have considered, it is understandable that dephasing noise (where $q=q_z$) does not affect the quantity $p_0$.
For these cases, we can strengthen the lower bound by considering events where the first gates act as Hadamards or Hadamards plus a phase gate $S$, the intermediate gates map $X_1$ to $X_1$ or $Y_1$ to $Y_1$, respectively, and the last gate inverts the first.
By symmetry, the previous analysis holds for these events as well.
We can combine everything to give the slightly better bound
\begin{align}
\mathbb{E}_\mathcal{B}[\delta] \geq \frac{\left(1-2b\right)^{2d}}{4 \cdot 30^d}, \label{eq:lowerbound}
\end{align}
where $b = \min[q_x + q_y, q_y + q_z, q_z + q_x] = q - \max[q_x, q_y, q_z]$.
\end{proof}
We also note that, for the case of perfect depolarizing noise on every qubit, we have $q_x = q_y = q_z = 1/4$.
This gives the trivial bound $\mathbb{E}_\mathcal{B}[\delta] \geq 0$, as it should, because perfect depolarizing noise immediately gives the identity operator on every qubit and the distance to the uniform distribution is exactly 0. Similarly, for a Haar random unitary, we have an estimate from the Porter-Thomas distribution $\Pr(p_{00\ldots 0}=p) = (2^n-1)(1-p)^{2^n-2}$ \cite{Boixo2018}, leading to
\begin{align}
\mathbb{E}_{U}[\delta] &= 2^{n-1}\int_0^1 dp \abs{p-\frac{1}{2^n}} \Pr(p_{00\ldots 0}=p) \ge e^{-1} .
\end{align}
Thus, in the case of noiseless Haar random circuits on architectures where the output probability approaches the Porter-Thomas distribution, our lower bound is expected to be a significant underestimate of the true value at large depths.
\citet{Barak2020} also obtain a similar lower bound for single-qubit marginal probabilities in the noiseless case.

\begin{lemma}\label{lem_singlequbitnoise}
Consider a single qubit starting in the state $\rho = \ketbra{0}$.
After $d$ applications of the channel $\mathcal{E}(\rho) = (1-q) \rho + q_x X\rho X + q_y Y \rho Y + q_z Z \rho Z$ (where $q = q_x + q_y + q_z$), the resulting state $\mathcal{E}^d(\rho)$ obeys
\begin{align}
p_0 := \Tr\left[ \frac{\mathds{1}+Z}{2}\mathcal{E}^d(\rho) \right] = \frac{1}{2} + \frac{1}{2}\left(1-2(q_x+ q_y)\right)^d.
\end{align}
\end{lemma}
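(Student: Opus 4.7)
The plan is to exploit the fact that Pauli channels are diagonal in the Pauli basis, so we can track the evolution of a single Bloch-vector component. Write $\rho_m = \mathcal{E}^m(\ketbra{0})$ as $\rho_m = (I + r_z^{(m)} Z)/2$; we only need the $Z$-component since the initial state has no $X$ or $Y$ component and $\mathcal{E}$ preserves this property (each Pauli conjugation maps $Z$ to $\pm Z$ and does not mix in $X$ or $Y$).

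First, I would record the conjugation rules $XZX = -Z$, $YZY = -Z$, $ZZZ = Z$, which immediately give a recursion on $r_z$:
\begin{align*}
r_z^{(m+1)} = \bigl[(1-q) - q_x - q_y + q_z\bigr] r_z^{(m)} = \bigl(1 - 2(q_x+q_y)\bigr)\, r_z^{(m)},
\end{align*}
using $q = q_x + q_y + q_z$. Thus $Z$ is an eigenoperator of $\mathcal{E}$ with eigenvalue $1 - 2(q_x+q_y)$, and iteration $d$ times gives $r_z^{(d)} = \bigl(1 - 2(q_x+q_y)\bigr)^d$ from the initial value $r_z^{(0)} = 1$.

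Finally, I would compute $p_0 = \Tr[(I+Z)\mathcal{E}^d(\rho)/2] = (1 + r_z^{(d)})/2$, which yields the stated expression. There is no real obstacle here; the only thing to be careful about is the bookkeeping of the sign contributions $-q_x - q_y + q_z$ versus $+(1-q)$ when collecting the eigenvalue, and the observation that $X$ and $Y$ components decouple from the $Z$-evolution, which justifies working in a one-dimensional invariant subspace of the Bloch representation.
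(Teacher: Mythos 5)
Your proof is correct and is essentially the same argument as the paper's: the paper runs an induction on the diagonal populations, showing $p_0(d+1) = (1-q_x-q_y)p_0(d) + (q_x+q_y)(1-p_0(d))$, which is exactly your Bloch-component recursion $r_z^{(m+1)} = (1-2(q_x+q_y))r_z^{(m)}$ under the affine change of variables $p_0 = (1+r_z)/2$. The eigenoperator framing is a slightly cleaner way to package the same one-dimensional iteration, and your sign bookkeeping $(1-q) - q_x - q_y + q_z = 1 - 2(q_x+q_y)$ is right.
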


\begin{proof}
It suffices to show that $\mathcal{E}^d(\rho) = p_0(d) \ketbra{0} + (1-p_0(d)) \ketbra{1}$, where $p_0(d) = \frac{1}{2} + \frac{1}{2}\left(1-2 (q_x + q_y)\right)^d$. We will prove this by induction. The $d = 0$ base case follows from the given initial state. The inductive step is
\begin{align}
\mathcal{E}^{d+1}&(\rho) = \mathcal{E}(\mathcal{E}^{d}(\rho)) \nonumber  \\
& = (1-q_x -q_y) [p_0(d) \ketbra{0} + (1-p_0(d)) \ketbra{1}] \nonumber \\ & \quad  + (q_x + q_y) [p_0(d) \ketbra{1} + (1-p_0(d)) \ketbra{0}] \nonumber \\
&  = p_0(d+1) \ketbra{0} + (1-p_0(d+1)) \ketbra{1},
\end{align}
which completes the proof.
\end{proof}

We also remark here that the lower bound (with potentially different constants) is applicable to all noise channels for which an analogue of \cref{lem_singlequbitnoise} holds.
The only condition we need is that after $d$ applications of the single-qubit noise channel to the initial state $\ketbra{0}$, the resulting state has $\abs{p_0(d) - 1/2} \geq \exp[-ad]$ for some $a>0$.
If this is satisfied, then one can take as an extreme case in the proof of \cref{thm_lowerbound_vardist} the Clifford circuits where all of the $d$ two-qubit Cliffords encountered by the first qubit act as identity on it.
This would change the denominator $30^d$ in \cref{eq:lowerbound} to $11520^d$, although similar symmetry arguments could improve this constant.
The same considerations apply to \cref{thm_no_anticonc}, whose proof makes use of the analysis in \cref{thm_lowerbound_vardist}.

As we mentioned in the introduction, \cref{thm_lowerbound_vardist} has wide-ranging consequences for the properties of noisy random circuits.
To strengthen the result, we now show that a similar lower bound applies to \emph{typical} noisy random circuits (i.e., individual elements of the ensemble that occur with high-probability) at sufficiently low depths.

\begin{corll}
\label{cor:typical_lower_bound}
For a noisy Haar-random circuit on any parallel circuit architecture  with a uniform upper bound on
the local noise rate $q_{\mu mi} \le q_\mu$ for all Pauli noise sectors $\mu$, $m$ and $i$,   we have the bound
\begin{equation}
    \Pr_{\mathcal{B}}(\delta < e^{-2 a d}) \le 8 e^{-a d}+16\cdot e^{(2a+\log 4)d}/n,
\end{equation}
where $a=-2 \log(1-2b) + \log(30)$ and $b = \min[q_x+q_y,q_y+q_z,q_z+q_x].$
\end{corll}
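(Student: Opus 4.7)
The plan is to apply the argument of Theorem 1 symmetrically to each of the $n$ qubits and then convert the resulting lower bound on an averaged expectation into a tail bound via a second-moment (Paley--Zygmund-type) concentration inequality.

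Applying the argument of Theorem 1 to the marginal on qubit $i$ in place of the first qubit, for every site $i$ one obtains $\mathbb{E}_\mathcal{B}[(p_0^{(i)} - 1/2)^2] \geq (1-2b)^{2d}/(4\cdot 30^d) = e^{-ad}/4$, together with the pointwise inequality $\delta \geq |p_0^{(i)} - 1/2|$, where $p_0^{(i)}$ denotes the marginal probability of outcome $0$ on qubit $i$. Since $|p_0^{(i)} - 1/2| \leq 1/2$ implies $|p_0^{(i)} - 1/2| \geq 2(p_0^{(i)} - 1/2)^2$, defining
\[
Y := \frac{2}{n}\sum_{i=1}^n \left(p_0^{(i)} - \frac{1}{2}\right)^2
\]
yields $\delta \geq Y$ pointwise, $Y \leq 1/2$, and $\mathbb{E}_\mathcal{B}[Y] \geq e^{-ad}/2$, reducing the problem to bounding $\Pr_\mathcal{B}(Y \leq e^{-2ad})$.

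Next I would bound $\mathrm{Var}_\mathcal{B}(Y)$ using the lightcone structure. The reduced state on qubit $i$ depends only on gates supported inside the backward lightcone $L_d^\dagger(i)$, so $Y_i := 2(p_0^{(i)} - 1/2)^2$ and $Y_j$ are independent over $\mathcal{B}$ whenever $L_d^\dagger(i) \cap L_d^\dagger(j) = \emptyset$. Any $j$ correlated with $i$ must lie in $L_d \circ L_d^\dagger(i)$, giving at most $4^d$ correlated sites per qubit, and each covariance satisfies $|\mathrm{Cov}(Y_i, Y_j)| \leq \mathbb{E}[Y_i^2] \leq 1/4$ since $Y_i \leq 1/2$; hence $\mathrm{Var}_\mathcal{B}(Y) = O(4^d/n)$. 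I would then apply the one-sided Paley--Zygmund bound
\[
\Pr(Y \leq T) \leq \frac{\mathrm{Var}(Y)}{(\mathbb{E}[Y])^2} + \frac{2T}{\mathbb{E}[Y]},
\]
which follows from $\Pr(Y>T) \geq (\mathbb{E}[Y]-T)^2/\mathbb{E}[Y^2]$ via Cauchy--Schwarz together with $\mathbb{E}[Y^2]\geq(\mathbb{E}[Y])^2$. Setting $T = e^{-2ad}$, the variance-over-mean-squared term yields the $O(4^d e^{2ad}/n) = O(e^{(2a+\log 4)d}/n)$ contribution and the $2T/\mathbb{E}[Y]$ term yields the $O(e^{-ad})$ contribution, matching the two-term structure of the stated bound up to absolute constants.

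The main obstacle is the variance estimate: one must verify that Haar-random unitary layers interleaved with Pauli-stochastic noise still obey the standard ``disjoint backward lightcones imply independent reduced states'' property (which uses the Kraus structure of the noise and the fact that the gates are drawn i.i.d.\ per location), and then count the correlated pairs tightly enough to retain the $4^d$ scaling rather than a looser bound such as $n\cdot 4^d$ or $8^d$. The remaining Paley--Zygmund step and the tracking of the absolute constants $8$ and $16$ are then routine.
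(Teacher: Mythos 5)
Your proposal is correct and follows essentially the same route as the paper's proof: a site-averaged surrogate $\tfrac{1}{n}\sum_i (p_0^{(i)}-1/2)^2$ lower-bounding $\delta$, a second-moment/variance bound via the $|L_d\circ L_d^\dagger(i)|\le 4^d$ lightcone counting and independence of sites with disjoint backward lightcones, and the Paley--Zygmund inequality (your ``one-sided'' form is just the paper's chain with $\mathbb{E}[Y^2]=\mathrm{Var}(Y)+\mathbb{E}[Y]^2$ substituted in). The independence step you flag as the main obstacle is handled in the paper at the same level of brevity, by noting that $\mathcal{N}_d^\dagger(Z_i)$ is supported on $L_d^\dagger(i)$ for Pauli noise, so there is no gap relative to the paper's own argument.
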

\begin{proof}
Following the arguments from \cref{thm_lowerbound_vardist}, we have the bound
\begin{equation}
    \delta \ge \Delta_i^2 := \frac{1}{4}\bra{0}\mathcal{N}^\dagger_d(Z_i) \ket{0}^2
\end{equation}
for every site $i$, where $\mathcal{N}_d^\dagger$ is the adjoint map of the depth-$d$ noisy circuit $\mathcal{N}_d$.
We can, thus, obtain a lower bound by the site-averaged quantity
\begin{equation}
\delta \ge \Delta^2 := \frac{1}{ n} \sum_i \Delta_i^2.    
\end{equation}
In \cref{thm_lowerbound_vardist}, we showed $\mathbb{E}_\mathcal{B}[\Delta_i^2]\geq e^{-ad}/4$, which implies $\mathbb{E}_\mathcal{B}[\Delta^2]\geq e^{-ad}/4$.
When the noise is given by a Pauli noise channel, as we consider, then the operator $\mathcal{N}_d^\dagger(Z_i)$ has support on at most $L_d^\dagger(i)$ sites.
We can use this fact to bound the second moment of the site-averaged quantity
\begin{align}
 \mathbb{E}_{\mathcal{B}}&[\Delta^4]  =   \frac{1}{n^2}\sum_{i,j}\mathbb{E}_{\mathcal{B}} [\Delta_i^2 
    \Delta_j^2], \\ 
    & = \frac{1}{n^2} \sum_{i;j \in L_{d}\circ L_d^\dagger(i)} \mathbb{E}_{\mathcal{B}} [\Delta_i^2 \Delta_j^2] \nonumber  \\ & \quad + \frac{1}{n^2} \sum_{i;j \notin L_{d}\circ L_d^\dagger (i)} \mathbb{E}_{\mathcal{B}}[\Delta_i^2] \mathbb{E}_{\mathcal{B}} [\Delta_j^2] \\
   &=   \frac{1}{n^2} \sum_{i;j \in L_{d}\circ L_d^\dagger(i)} (\mathbb{E}_{\mathcal{B}} [\Delta_i^2 \Delta_j^2] - \mathbb{E}_{\mathcal{B}}[\Delta_i^2] \mathbb{E}_{\mathcal{B}} [\Delta_j^2]) \nonumber \\ & \quad + \mathbb{E}_{\mathcal{B}}[\Delta^2]^2  \\
    & \le  \mathbb{E}_{\mathcal{B}}[\Delta^2]^2  + \frac{1}{n} \max_{i;j \in L_{d}\circ L_d^\dagger(i)} |L_{d}\circ L_d^\dagger(i)|  \sigma_i \sigma_j, \label{eqn:boundvar}
\end{align}
where we defined $\sigma_i^2 = \mathbb{E}_{\mathcal{B}}[\Delta_i^4] - \mathbb{E}_{\mathcal{B}}[\Delta_i^2]^2 \le 1$.
To see this, we use the Cauchy-Schwartz inequality:
\begin{align}
 \mathbb{E}_\mathcal{B}\left[(\Delta_i^2-\mathbb{E}_\mathcal{B}[\Delta_i^2])( \Delta_j^2-\mathbb{E}_\mathcal{B}[\Delta_j^2]) \right] \leq  \sigma_i \sigma_j, \label{eq_crosscorrelation}
\end{align}
whenever $j \in L_d \circ L_d^\dag(i)$.
For $j$ outside this set, the cross-correlation is zero.
Applying the lower bound from \cref{thm_lowerbound_vardist}, we now have the sequence of inequalities
\begin{align}
&\Pr_{\mathcal{B}}[\delta \le e^{-2 a d}] \leq \Pr_{\mathcal{B}}\left[\delta \le 4e^{-a d} \mathbb{E}_{\mathcal{B}}[\Delta^2]\right],  \\
& \leq \Pr_{\mathcal{B}}\left[\Delta^2 \le 4e^{-ad} \mathbb{E}_{\mathcal{B}}[\Delta^2] \right] , \\
&= 1- \Pr_{\mathcal{B}}\left[\Delta^2 > 4e^{-ad} \mathbb{E}_{\mathcal{B}}[\Delta^2]\right] ,\\
& \leq 1 - (1-4e^{-ad})^2 \frac{\mathbb{E}_{\mathcal{B}}[\Delta^2]^2}{\mathbb{E}_{\mathcal{B}}[\Delta^4]}, \label{eqn:paley} \\ \label{eqn:step2}
& \le 1 - \frac{ (1-4e^{-ad})^2\mathbb{E}_{\mathcal{B}}[\Delta^2]^2}{\mathbb{E}_{\mathcal{B}}[\Delta^2]^2 + \frac{1}{n} \max_{i;j \in L_{d}\circ L_d^\dagger(i)} [|L_{d}\circ L_d^\dagger(i)| \sigma_i \sigma_j]} ,\nonumber \\ 
&\le \frac{8 e^{-ad} \mathbb{E}_{\mathcal{B}}[\Delta^2]^2 + \frac{1}{n} \max_{i;j \in L_{d}\circ L_d^\dagger(i)} [|L_{d}\circ L_d^\dagger(i)| \sigma_i \sigma_j]}{\mathbb{E}_{\mathcal{B}}[\Delta^2]^2 + \frac{1}{n} \max_{i;j \in L_{d}\circ L_d^\dagger(i)} [|L_{d}\circ L_d^\dagger(i)| \sigma_i \sigma_j]},\\ \label{eqn:step}
& \leq \frac{8 e^{-ad} \mathbb{E}_{\mathcal{B}}[\Delta^2]^2 + \frac{1}{n} \max_{i;j \in L_{d}\circ L_d^\dagger(i)} [|L_{d}\circ L_d^\dagger(i)| \sigma_i \sigma_j]}{\mathbb{E}_{\mathcal{B}}[\Delta^2]^2} \\
& \leq 8e^{-ad} + 16\cdot 4^d e^{2 ad}/n,
\end{align}
where we applied the Paley-Zygmund inequality in \cref{eqn:paley}, used \cref{eqn:boundvar} in \cref{eqn:step2}, and used the fact that $|L_{d}\circ L_d^\dagger(i)| \le 4^d$ for every site $i$ to get \cref{eqn:step}.   
\end{proof}

\cref{cor:typical_lower_bound} shows that, for any depth that grows slower than 
\begin{equation}
    d < \left(\frac{1}{2a +\log 4}\right) \log n,
\end{equation}
for most circuits the total variation distance $\delta$ is lower bounded by $e^{-cd}$ for some fixed constant $c$.
This strengthens \cref{thm_lowerbound_vardist} by showing that the lower bound on $\mathbb{E}_\mathcal{B}[\delta]$ in \cref{thm_lowerbound_vardist} is actually characteristic of typical circuits at these depths.
The typicality result rules out the possibility that $\mathbb{E}_\mathcal{B}[\delta]$ is dominated by rare circuits with unusually large deviation $\delta$.
It is an interesting subject for future work to study similar typicality results for higher depths that scale polynomially with $n$.

\subsection{Upper bound on the distance to uniform}
In this section, we will study a partially heralded noise model where first a random set of sites are selected after each layer of the circuit independently with probability $p$.
At each site $i$ in this random subset, a local dephasing channel $\mathcal{E}_i$ is applied with dephasing parameter $q$, where
\begin{align}
\mathcal{E}_i(\rho) = (1-q) \rho + q Z_i \rho Z_i.
\end{align}
In the limit $p\to 1$, this becomes a standard local dephasing model with parameter $q$, while $q \to 1/2$ is equivalent to a model where a random set of sites are measured at rate $p$ in the $Z$-basis, but without keeping track of the measurement outcomes.
For $p<1$, we absorb the random locations of the dephasing events into the ensemble $\mathcal{B}$. 

The ``heralding'' refers to the fact that the set of sites where the measurements occurred is known, but not the measurement outcomes.
Note that this is different from a dephasing model where each site is uniformly dephased with dephasing parameter $pq$.
In particular, the noise locations act as an additional source of randomness in the model.

We focus on this noise model for two reasons.
First, we would like to verify the intuition that noise acting during a random unitary circuit renders the output distribution ``worthless'' (close to the uniform distribution) after logarithmic depth \cite{Aharonov1996}, even though there are atypical circuits that can avoid the effects of the noise in a heralded dephasing model.
The second motivation arises from the observation that when the measurement outcomes are also known, such models exhibit an entanglement transition in the conditional evolution of the quantum states as mentioned in \cref{sec_monitoredrqc} \cite{Potter2021}.
Our analysis of the heralded dephasing model proves that discarding the measurement outcomes, but maintaining knowledge of the noise locations, is enough to remove any signature of such entanglement transitions.
  
For a noisy Haar random circuit with this noise model, we prove an upper bound on the circuit-averaged total variation distance $\delta$ that is independent of the circuit architecture:

\begin{thm} \label{thm_upperbound_vardist}
For a noisy Haar random circuit on any parallel circuit architecture with heralded dephasing noise at rate $p$ with the dephasing parameter $q$, we have the upper bound
\begin{align}\mathbb{E}_{\mathcal{B}} [\delta] < \frac{3^{2/3}}{2}n^{1/3} e^{- \gamma p d/3},
\end{align}
where $\gamma = 8 q (1-q)/3 $. 
\end{thm}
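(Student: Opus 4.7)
The plan is to combine three ingredients: a Cauchy--Schwarz reduction from the total variation distance $\delta$ to the collision probability $Z$, a replica / statistical mechanics mapping that bounds the ensemble-averaged value of $Z$, and a tail-integration argument using $\delta \leq 1$ that upgrades the resulting second-moment estimate to a cube-root bound on $\mathbb{E}_{\mathcal{B}}[\delta]$; the scalings $n^{1/3}$ and $e^{-\gamma p d/3}$ both originate from this last step.

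The Cauchy--Schwarz inequality on the $2^n$ bit strings gives the pointwise estimate $\delta \leq \tfrac{1}{2}\sqrt{2^n Z - 1}$. Expanding in the Pauli-$Z$ Fourier basis and applying Parseval yields $2^n Z - 1 = \sum_{S \neq \emptyset}\Tr[Z^S \rho]^2$, reducing the task to bounding $\mathbb{E}_{\mathcal{B}}[\Tr[Z^S \rho]^2] = \Tr[(Z^S \otimes Z^S)\,\Phi_d(\rho_0^{\otimes 2})]$ for each nonempty $S$, where $\Phi_d = \mathbb{E}_{\mathcal{B}}[(\mathcal{N}_d)^{\otimes 2}]$ is the ensemble-averaged second-moment channel on the doubled Hilbert space. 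Averaging $U^{\otimes 2}$ for each Haar-random two-qubit gate projects $\Phi_d$ onto the two-dimensional permutation subspace (spanned by the replica identity $\mathds{1}$ and the replica swap $\sigma$) at every site and pins the two sites of each gate to a common permutation label, yielding a two-state stat-mech transfer-matrix model. A direct Weingarten-type computation shows that, once the random heralding locations are absorbed into $\mathcal{B}$, the doubled dephasing channel shrinks the $\sigma$-coefficient by $1 - p\gamma$ per layer, with $\gamma = 8q(1-q)/3$ emerging as $1 - \beta$ where $\beta$ is the $\sigma \to \sigma$ matrix element of $\mathcal{E}^{\otimes 2}$ after projection back into the permutation subspace. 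Propagating the boundary conditions (from $\rho_0^{\otimes 2}$ at the initial time, and the per-site projection $-\mathds{1}/3 + 2\sigma/3$ of $Z \otimes Z$ for sites $i \in S$ at the final time) through the resulting partition function and summing over $S \neq \emptyset$ should give a bound of the form $\mathbb{E}_{\mathcal{B}}[2^n Z - 1] \leq \tfrac{4n}{3}\,e^{-\gamma p d}$, and hence $\mathbb{E}_{\mathcal{B}}[\delta^2] \leq \tfrac{n}{3}\,e^{-\gamma p d}$.

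A naive $\mathbb{E}[\delta] \leq \sqrt{\mathbb{E}[\delta^2]}$ would give only $\sqrt{n/3}\,e^{-\gamma p d/2}$, weaker in its $n$-dependence than the claimed bound. To strengthen it, I would use $\delta \leq 1$ (hence $\delta^3 \leq \delta^2$) with Markov's inequality at the third moment followed by a tail integration: writing $M = \mathbb{E}_{\mathcal{B}}[\delta^2]$ and using $\Pr_{\mathcal{B}}(\delta > t) \leq M/t^3$,
\begin{equation*}
\mathbb{E}_{\mathcal{B}}[\delta] = \int_0^\infty \Pr_{\mathcal{B}}(\delta > t)\,dt \leq M^{1/3} + \int_{M^{1/3}}^\infty \frac{M}{t^3}\,dt = \tfrac{3}{2}\,M^{1/3}.
\end{equation*}
Substituting $M \leq \tfrac{n}{3}\,e^{-\gamma p d}$ produces $\mathbb{E}_{\mathcal{B}}[\delta] \leq \tfrac{3}{2}(n/3)^{1/3}\,e^{-\gamma p d/3} = \tfrac{3^{2/3}}{2}\,n^{1/3}\,e^{-\gamma p d/3}$, exactly the stated bound.

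The main obstacle is the stat-mech partition-function estimate itself. The sum over $S$ has $2^n - 1$ terms, so any per-$S$ bound must be combined carefully with the $\binom{n}{|S|}$-fold multiplicity and with the fast decay of high-weight contributions. Moreover, a naive factorization $(1 - p\gamma)^{|S| d}$ is incorrect, because the Haar-gate coupling allows permutation mass to flow between sites under the averaged dynamics. The precise bound requires tracking this mass flow through the transfer-matrix model and exploiting that, for the boundary conditions obtained by summing $Z^S \otimes Z^S$ over $S \neq \emptyset$, the weight-one contributions dominate and supply the linear-in-$n$ prefactor needed to produce the stated constant $3^{2/3}/2$ after the cube-root step.
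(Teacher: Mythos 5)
Your outer scaffolding is sound: the Cauchy--Schwarz reduction $\delta \le \tfrac{1}{2}\sqrt{2^n Z - 1}$, the Parseval identity, and the third-moment Markov plus tail integration (which correctly yields $\mathbb{E}_{\mathcal{B}}[\delta] \le \tfrac{3}{2}\mathbb{E}_{\mathcal{B}}[\delta^2]^{1/3}$ and reproduces the stated constant) are all fine, and the stat-mech machinery you invoke is essentially the one the paper uses to bound the collision probability. The gap is the central quantitative claim $\mathbb{E}_{\mathcal{B}}[2^n Z - 1] \le \tfrac{4n}{3}e^{-\gamma p d}$, which you defer as ``the main obstacle'' and which is in fact \emph{false} at sublogarithmic depth. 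The transfer-matrix computation (carried out in the paper's appendix) gives, for the dominating ensemble, $2^n\mathbb{E}[Z] = \bigl[1+\tfrac{1}{3}(1-p\gamma)^d\bigr]^n$: the per-site factors multiply, so $\mathbb{E}[2^nZ-1]$ behaves like $\exp[\tfrac{n}{3}e^{-\gamma p d}]-1$, which is exponentially large in $n$ whenever $d \lesssim \log n/(\gamma p)$. Your heuristic that ``weight-one contributions dominate and supply the linear-in-$n$ prefactor'' fails precisely in that regime, so your second-moment bound, and hence the tail integration built on it, does not close there. The proof can be patched --- for $d$ small enough that $n e^{-\gamma p d} \ge 8/9$ the theorem's right-hand side exceeds $1 \ge \delta$ and is trivially true, and in the complementary regime $e^x - 1 \le \tfrac{4}{3}x$ with $x = \tfrac{n}{3}e^{-\gamma p d}$ does give your claimed linear bound --- but you must make this case split explicit; as written the argument is incomplete.

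It is worth contrasting with how the paper sidesteps this entirely: instead of Cauchy--Schwarz it uses Pinsker's inequality, $2\delta^2 \le n\log 2 - H_2(\mathcal{D})$, followed by Jensen, so that the multiplicative structure $\bigl[1+\tfrac{1}{3}(1-p\gamma)^d\bigr]^n$ enters only through its logarithm and yields $\mathbb{E}_{\mathcal{B}}[2\delta^2] \le \tfrac{n}{3}e^{-\gamma p d}$ uniformly in $d$, with no case analysis. The final step is then an $\epsilon + \mathbb{E}[\delta^2]/\epsilon^2$ optimization rather than your tail integral; your version of that step is marginally sharper (it exploits $\delta \le 1$), which is why you land on the same constant despite a second-moment bound that is a factor of $2$ weaker.
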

\begin{proof}
We start from Pinsker's inequality, which states that the total variation distance between distributions $P$ and $Q$ is related to the corresponding KL-divergence (the classical relative entropy) by the relation $\delta(P, Q) \leq \sqrt{D_\mathrm{KL}(P||Q)/2}$.
The KL-divergence with respect to the uniform distribution $\mathcal{U}$ is given by $n\log 2 - H(P)$, where $H(P)$ is the Shannon entropy.
Let $\mathcal{D}$ denote the distribution of measurements for a circuit.
This gives us the following chain of inequality for variation distance to the uniform distribution:
\begin{align}
2\delta(\mathcal{D}, \mathcal{U})^2 &\leq D_{KL}\left(\mathcal{D}||\mathcal{U}\right), \\ &= n\log 2 - H(\mathcal{D}) \leq n\log 2 - H_2(\mathcal{D}),
\end{align}
\label{eq:upper-bond-tvd-squared}
where the last inequality follows from the fact that second R\'enyi entropy $H_2(\mathcal{D}) = -\log\left(\sum_{x}p_x^2\right)$ is less than or equal to the von-Neumann entropy $H(\mathcal{D}) = H_{\alpha \to 1}(\mathcal{D}) = -\sum_x p_x \log p_x$.
We can now use Markov's inequality to bound the average TVD.
For any $\epsilon \in [0,1]$, letting $\Pr_\mathcal{B}(\delta=\sigma)$ denote the probability \emph{density} of the continuous variable $\delta$, we have
\begin{align}
    \mathbb{E}_{\mathcal{B}}(\delta) &= {\int_{0}^{\epsilon}\mathrm{d}\sigma \sigma \Pr_{\mathcal{B}}(\delta=\sigma) + \int_{\epsilon}^{1}\mathrm{d}\sigma \sigma \Pr_{\mathcal{B}}(\delta=\sigma)}, \\
    &\leq \epsilon + {\Pr_{\mathcal{B}}(\delta \ge \epsilon)} \leq \epsilon + \frac{\mathbb{E}_{\mathcal{B}}(\delta^2)}{\epsilon^2}.
    \label{eq:tvd-expectation}
\end{align}
If $\mathbb{E}_{\mathcal{B}}(\delta^2)$ decays exponentially or faster with depth, i.e., $e^{-\gamma d}$ for some $\gamma > 0$, we can take $\epsilon = e^{-\gamma d/3}$ to ensure that  $\mathbb{E}_{\mathcal{B}}(\delta)$ decays exponentially as  $e^{-\gamma d/3}$.
To show that the second moment of TVD must indeed decay exponentially with $d$, we calculate the expectation of \cref{eq:upper-bond-tvd-squared}:
\begin{align}
   \mathbb{E}_{\mathcal{B}}[2\delta(\mathcal{D}, \mathcal{U})^2] &\leq n\log 2 - \mathbb{E}_{\mathcal{B}}(H_2(\mathcal{D})),  \nonumber \\ &= n\log2 + \mathbb{E}_{\mathcal{B}}\left[ \log\left(\sum_{x} p_x^2\right)\right], \nonumber \\ & \leq n\log2 + \log {\mathbb{E}_{\mathcal{B}}\left[\sum_{x} p_x^2\right]}.
\end{align}
In the last inequality, we have used Jensen's inequality for concave functions $\mathbb{E}_{\mathcal{B}}[f(X)] \leq f(\mathbb{E}_{\mathcal{B}}[X])$.
The term inside the expectation function, $\sum_x p_x^2$, is the collision probability.
From \cref{lem_collisionprob_upperbound}, we have that the expectation of the collision probability is upper-bounded by $2^{-n} \exp[\frac{n}{3}e^{-\gamma pd}]$, where $\gamma = 8q(1-q)/3$.
With this, we have
\begin{align} \nonumber
\mathbb{E}_{\mathcal{B}}[2\delta(\mathcal{D}, \mathcal{U})^2]  &\leq n\log 2 + \log \left[ 2^{-n}\exp\left[\frac{n}{3} e^{-\gamma pd} \right] \right] ,\\
& = \frac{n}{3} e^{-\gamma p d}.
\end{align}
Thus, we have that the second moment of the TVD decays exponentially in circuit depth.
The right hand side of \cref{eq:tvd-expectation} is minimized at $\epsilon = \left({n}/{3}\right)^{1/3} e^{-\gamma pd/3}$.
This yields the desired bound
\begin{align}
\mathbb{E}_{\mathcal{B}}(\delta) \leq \frac{3^{2/3}}{2}n^{1/3}e^{-\gamma pd/3}.
\end{align}
\end{proof}
To complete the proof, it remains to prove the following lemma:

\begin{lemma} \label{lem_collisionprob_upperbound}
For a noisy Haar random circuit on any parallel circuit architecture with heralded dephasing noise at rate $p$ with the dephasing parameter $q$, we have the upper bound on the collision probability $Z$
\begin{equation}
\mathbb{E}_{\mathcal{B}}[Z] = \mathbb{E}_{\mathcal{B}}\left[ \sum_x p_x^2 \right]\leq 
2^{-n} \exp[\frac{n}{3}e^{-\gamma pd}],
\end{equation}
where $\gamma=8 q (1-q)/3.$
\end{lemma}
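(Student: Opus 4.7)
My plan is to exploit the second-moment structure of the collision probability by writing $\mathbb{E}_{\mathcal{B}}[Z] = \Tr[\mathbb{E}_{\mathcal{B}}[\rho_d^{\otimes 2}] \, \Pi]$, where $\Pi = \sum_{x\in\{0,1\}^n} \ket{xx}\bra{xx}$ is the diagonal two-copy projector and $\rho_d = \mathcal{N}_d(\ket{\oo}\bra{\oo})$. The task reduces to understanding the evolution of $\mathbb{E}_{\mathcal{B}}[\rho_d^{\otimes 2}]$, which I would analyze via a statistical-mechanics mapping of the type developed in \cite{Nahum2017,Hunter-Jones2019,Dalzell2020} and promised by the authors in \cref{apx_upperbound}.

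First I would perform the Haar average over each two-qubit gate acting on four copies of the local Hilbert space. By Schur--Weyl duality, the resulting operator is supported on the two-dimensional subspace spanned by the identity and swap permutations of the two copies, so I can reduce the doubled dynamics to a classical model with binary spin variables $\sigma \in \{I, S\}$ at each gate of the spacetime lattice. Unperturbed Haar gates act as ferromagnetic couplings favoring agreement of neighboring spins, while the boundary conditions at the bottom (pure initial state $\ket{\oo}\bra{\oo}$) and top (the projector $\Pi$) fix the stat-mech partition function whose value I want to bound.

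Next I would incorporate the heralded dephasing. At each spacetime point, with probability $p$, the channel $\mathcal{E}(\rho) = (1-q)\rho + qZ\rho Z$ is applied; on the two-copy symmetric subspace its action in the $\{I,S\}$ basis is a $2\times 2$ matrix whose spectrum I would compute explicitly. I expect the subleading eigenvalue to take the form $1-\gamma$ with $\gamma = 8q(1-q)/3$, accounting for the contraction of ``off-ferromagnetic'' fluctuations. Averaging over the random noise locations then gives a per-layer suppression of $e^{-\gamma p}$ on the non-identity sector, and hence an overall factor of $e^{-\gamma p d}$ after $d$ layers along the temporal direction of the stat-mech model.

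Finally, I would bound the resulting partition function site-by-site. The ``all $I$'' configuration contributes the uniform baseline $2^{-n}$, while any site promoted to the $S$ sector is damped by a factor of at most $\tfrac{1}{3}e^{-\gamma pd}$ once the accumulated spacetime decay is taken into account; the prefactor $\tfrac{1}{3}$ originates from the Haar second moment of a single qubit, for which $\mathbb{E}[Z]=2/3=2^{-1}(1+\tfrac{1}{3})$. Summing over the $2^n$ spin configurations and applying $1+x \le e^x$ then yields the claimed bound. The main obstacle is this last step: two-qubit gates couple neighboring sites, so the per-site factorization is not automatic. To make it rigorous one would need a Peierls-style domain-wall estimate, or an inductive argument that integrates out interior ``frozen'' spins along the lightcone structure, showing that any configuration with $k$ excited sites incurs multiplicative suppression of at least $(\tfrac{1}{3}e^{-\gamma pd})^k$ regardless of their positions.
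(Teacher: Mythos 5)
Your framework is the right one---the paper's proof in \cref{apx_upperbound} also maps the two-copy average onto $\{I,S\}$ configurations following \citet{Dalzell2020}, extracts the contraction $\beta = 1-8q(1-q)/3$ on the swap sector from the composite channel $M_{U_1}\circ\mathcal{E}\circ M_{U_1}$, averages the Bernoulli noise locations to get $(1-p\gamma)^d \le e^{-\gamma p d}$, and finishes with $2^{-n}\prod_i(1+\tfrac{1}{3}\beta^{t_i})$ and $1+x\le e^x$. But the step you flag as ``the main obstacle'' is in fact the entire content of the proof, and you have not supplied it. The difficulty is exactly as you say: the two-qubit gates generate domain-wall weights $2/5$ at spacetime plaquettes where neighboring spins disagree, so the partition function does not factorize over sites, and there is no obvious way to charge a global suppression $(\tfrac{1}{3}e^{-\gamma pd})^k$ to $k$ ``excited'' sites---the noise damping acts along worldlines at the heralded locations, while the $2/5$ weights live on gates, and these bookkeepings do not align. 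A Peierls estimate is not known to close this gap here.

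The paper's resolution is a monotonicity/replacement argument (\cref{lem:modified-circuit}): one shows, gate by gate and separately for gates with zero, one, or two dephased output legs, that replacing each Haar two-qubit gate by an equal mixture of SWAP and identity followed by single-qubit Haar gates can only \emph{increase} the average collision probability, essentially because $2/5 < 1/2$ and $\tfrac{2}{5}\bigl(1+(\alpha+\beta)^2\bigr) < \alpha+\beta$ for the relevant $\alpha,\beta$. After this replacement the circuit is a permutation of decoupled single-qubit chains, the per-site factorization is exact rather than approximate, and the single-qubit computation (\cref{lem:single-qubit}) delivers the $\tfrac{1}{12}(3-\beta^{t_i})I + \tfrac{1}{6}\beta^{t_i}S$ form that your final step needs. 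Without this lemma, or a genuine substitute for it, your argument establishes the answer for a product of single-qubit channels but not for the actual interacting ensemble, so the proof is incomplete at its central point.
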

To prove this bound, we make use of the statistical mechanics mapping method developed by \citet{Dalzell2020}.
The proof of \cref{lem_collisionprob_upperbound} can be found in \cref{apx_upperbound}.

\subsection{No-go for anticoncentration at low depth}
In this section, we study the properties of quantum circuit dynamics at sublogarithmic depth, which is defined as a limit where we fix $0\le c<1$ and scale depth as $d= O[(\log n)^c]$ while taking $n\to \infty$.
At this depth, there is still a notion of locality in the circuit because the lightcone $L_d(i)$ of each site cannot extend across the whole system in the large-$n$ limit.
We will prove that sampling from Haar random circuits on any parallel circuit architecture at sublogarithmic depth leads to a poorly anticoncentrated output distribution.
We consider the case of both noisy and noiseless circuits.

\begin{thm} \label{thm_no_anticonc}
Consider a Haar random circuit ensemble on any parallel architecture subject to Pauli noise with a uniform upper bound on the local noise rate $q_{\mu mi} \le q_\mu$ for all Pauli noise sectors $\mu$, $m$ and $i$.
Also, let $b= \min[q_x+q_y,q_y+q_z,q_z+q_x] < 1/2$ and $a' = \log 120 -2\log(1-2b)$.
If the depth of the circuit ensemble is sublogarithmic (i.e., satisfies $1 \leq d =o(\log n)$), then 
\begin{align}
\lim_{n\to \infty} \Pr_{\mathcal{B}}\left[p_{\oo}  < \frac{1}{2^n e^{ne^{-a'd}/4}} \right]  = 1.
\end{align}
\end{thm}
\begin{proof}
The strategy of the proof will be to show that a bound on the logarithm,
\begin{align}-\frac{1}{n} \log p_{\oo},
\end{align}
is sufficiently concentrated about its mean.
This can be thought of as the first step in proving a central limit theorem-like behavior for the log-output probability similar to the behavior of the free-energy in classical statistical mechanics.

To see this, we express the output probability $\po$ in terms of conditional probabilities:
\begin{align}
p(x_1=0,\ldots,x_n=0) = p(x_1=0)p(x_2=0|x_1=0) \ldots \nonumber\\
p(x_n=0 |x_1=0,\ldots,x_{n-1}=0),
\end{align}
which also holds when the $x_i$ are reordered by any permutation.  Taking the logarithm of both sides and summing over all permutation representations of the formula we arrive at the expression
\begin{align}
-\log p_{00\ldots 0} = -\frac{1}{n!} \sum_{\sigma \in S_n} \sum_i  \log p(x_{\sigma(i)} = 0 | x_{\sigma(1)}=0,\ldots \nonumber \\, x_{\sigma(i-1)}=0),
\end{align}
where $S_n$ is the permutation group on $(1,\ldots,n)$.
We rewrite $2 p(x_i=0|\{x_j=0\}_{j \in J_i}) = 1 +\langle{Z_i}\rangle_{J_i}$ to express the sum as
\begin{align}
& -\log p_{00\ldots 0} - n \log 2 = - \frac{1}{n!} \sum_{\sigma \in S_n} \sum_i \log( 1 +\langle Z_{\sigma(i)} \rangle_{\sigma(\{1,\ldots,i-1\} )}) \\
& \ge -\frac{1}{n!} \sum_{\sigma \in S_n} \sum_i \langle Z_{\sigma(i)} \rangle_{\sigma(\{1,\ldots,i-1\} )} + \frac{1}{4 n!} \sum_{\sigma \in S_n} \sum_i \langle Z_{\sigma(i)} \rangle_{\sigma(\{1,\ldots,i-1\} )}^2  \\
&= \frac{1}{n!} \sum_{\sigma \in S_n} A_{\sigma} + \frac{1}{4 n} \sum_{i,j} \frac{1}{(n-1)!} \sum_{\sigma_{ji}} \langle Z_j\rangle^2_{\sigma_{ji}(\{1,\ldots,i-1\})},
\end{align}
where we use the uniform lower bound $-\log( 1+x) \ge -x + x^2/4$ for $x \in[-1,1]$ and denote by $\sigma_{ji}$ the subset of permutations with $\sigma(i)=j$.
Now let $J_j^{(i)}$ run over all the subsets of $\{1,\ldots,n\}$ not containing $j$ of length $i-1$.
This set has size $\binom{n-1}{i-1}$.
For each $J_j^{(i)}$, the term $\langle Z_j \rangle^2_{J_{j}^{(i)}}$ appears multiple times in the above sum since there are multiple permutations $\tau$ satisfying $\tau(i)=j$ and $\{\tau(1),\tau(2)\ldots \tau(i-1)\}= \{\sigma(1),\sigma(2)\ldots \sigma(i-1)\}$.
The number of such permutations is $(i-1)! \times (n-i)!$.
Therefore, the above is equal to
\begin{align}
&\frac{1}{n!} \sum_{\sigma \in S_n} A_{\sigma} + \sum_{i,j} \frac{1}{4n \, \binom{n-1}{i-1} } \sum_{J_j^{(i)}} \langle Z_j \rangle^2_{J_{j}^{(i)}}  \\
&\ge \frac{1}{n!} \sum_{\sigma \in S_n} A_{\sigma} + \sum_j \sum_{i=1}^{n-L_j(d)+1} \frac{1}{4n} \frac{\binom{n-L_j(d)}{i-1}}{\binom{n-1}{i-1}} \langle Z_j \rangle^2,
\end{align}
where we denote $L_i(d)= |L_d\circ L_d^\dag(i)|$.
In the above, we have restricted the sum to only those subsets $J_j^{(i)}$ with $L_d\circ L_d^\dag(j) \cap J_j^{(i)} = \emptyset$.
This ensures conditional independence $p(x_j=0 |\{x_k = 0\}_{k\in J_j^{(i)}}) = p(x_j=0)$, which removes the conditional dependence on $\langle {Z_j} \rangle_{J_j^{(i)}}$.
The number of subsets $J_j^{(i)}$ such that there is no qubit in $L_d\circ L_d^\dag(j) \cap J_j^{(i)}$ is $\binom{n-L_j(d)}{i-1}$.
The sum over $i$ is truncated to $n-L_j(d)+1$ since for larger $i$, we will get back terms with conditional dependence.
Continuing, we use the hockey-stick relation $\sum_{i=1}^{n-L_j(d)+1} \binom{n-1}{n-L_j(d)-i+1} = \binom{n}{n-L_j(d)}$ to get
\begin{align}
-\log \po \geq n\log 2 +  \frac{1}{n!} \sum_{\sigma \in S_n} A_{\sigma} + \sum_j \frac{1}{4 L_j(d)} \langle Z_j \rangle^2 \\
\geq n\log 2 + \frac{1}{n!} \sum_{\sigma \in S_n} A_{\sigma} + \frac{1}{4 L_{\max}(d)}\sum_j  \langle Z_j \rangle^2 =: x, \label{eq_lowerbound}
\end{align}
where $L_{\max}(d)=\max_i L_i(d)$.

The proof of \cref{thm_lowerbound_vardist} provides the lower bound $\mathbb{E}_{\mathcal{B}} [\langle{Z_i}\rangle^2 ] \geq e^{-ad}$, while we also have $\mathbb{E}_{\mathcal{B}}[ A_\sigma ] = 0$.
As a result, we have a lower bound 
\begin{align}
\mathbb{E}_{\mathcal{B}}[-\log p_{00\ldots0}] \ge n \log 2 + n e^{-ad}/(4 L_{\max}(d)).
\end{align}

In \cref{lem_smallvariance}, we prove that the variance of the lower bound on $-\log p_{00\ldots0}$, denoted by $x$ in \cref{eq_lowerbound}, is $\sigma^2 \leq \sigma'^2 = 2n = o(n^2)$.
This implies that
\begin{align}
&\Pr_{\mathcal{B}}[\abs{x-\mathbb{E}_{\mathcal{B}}[x]} > k\sigma'] \leq \Pr[\abs{x-\mathbb{E}_{\mathcal{B}}[x]} > k\sigma] \leq \frac{1}{k^2}
\\ &\implies \Pr_{\mathcal{B}}[-x > -\mathbb{E}[x] + k\sigma'] \leq \frac{1}{k^2}
\\ &\implies  \Pr_{\mathcal{B}}[\log \po > -\mathbb{E}_{\mathcal{B}}[x] + k\sqrt{2n}] \leq \frac{1}{k^2}.
\end{align}
Since $L_\mathrm{max}(d) \leq 4^d$, we have $-\mathbb{E}_{\mathcal{B}}[x] \leq -n\log 2 -ne^{-a'd}/4$ with $a' = a + \log 4$, which means
\begin{align}
\Pr_{\mathcal{B}}\left[\po < 2^{-n} \exp[-\frac{ne^{-a'd}}{4}
+ k\sqrt{2n}]\right] &\geq 1-\frac{1}{k^2}.
\end{align}
Choose $k= \Theta(n^{0.01})$ so that
\begin{align}
\lim_{n\to \infty}\Pr_{\mathcal{B}}\left[\po < 2^{-n} \exp\left(-\frac{ne^{-a'd}}{4} + O(n^{0.51})\right)\right] = 1.
\end{align}
This means a lack of anticoncentration through $\lim_{n\to \infty} \Pr_{\mathcal{B}}[\po = o(2^{-n})]=1$ whenever ${ne^{-a'd}}  = \omega(n^{0.51})$, which is satisfied for any $d <  {0.49 \log n}/{a'}$, with $a' = \log 120 -2\log(1-2b)$, where $b=\min[q_x+q_y, q_y+q_z, q_z+q_x]$.
Thus we have established a lack of anticoncentration for any sublogarithmic depth $d$ subject to the proof of \cref{lem_smallvariance}.
\end{proof}
Before stating \cref{lem_smallvariance}, we state the following Corollary for noiseless circuits:
\begin{corll} \label{cor_noanticonc}
An ensemble of noiseless Haar random circuits on any parallel architecture at sublogarithmic depth $1 \leq d \leq o(\log n)$ is poorly anticoncentrated, satisfying
\begin{align}
\lim_{n\to \infty} \Pr_{\mathcal{B}}\left[p_{\oo}  < \frac{1}{2^n e^{n/(4\cdot 120^d)}} \right]  = 1.
\end{align}
\end{corll}
\begin{proof}
The proof for the noiseless case follows directly by taking $b=0$ in \cref{thm_no_anticonc}.
\end{proof}

We comment on the relation between this no-go result for anticoncentration at sublogarithmic depth and a related result of \citet{Dalzell2020}.
Dalzell et al.\ adopted a different definition of anticoncentration in terms of the expected collision probability $\mathbb{E}_\mathcal{B}[Z] \leq a/2^n$.
As mentioned earlier and justified in the proof of \cref{thm_noisy_anticonc}, a bound on the collision probability implies anticoncentration as defined in \cref{def_antconc}.
In the original context of quantum computational advantage proofs \cite{Aaronson2013a}, the \cref{def_antconc} of anticoncentration is more relevant.
Moreover, there are simple cases where this definition of anticoncentration is satisfied but there is not a good bound on the collision probability.
For example, consider a distribution where 1/2 of the probability mass is concentrated on a single outcome and the remaining 1/2 of the mass is equally distributed over all the other $2^n - 1$ outcomes.
The distribution satisfies \cref{def_antconc} by taking $\alpha =1/2$ and any $c < 1$.
The collision probability for this distribution is $Z = 1/4 + O(1/2^n) \gg a/2^n$.

\citet{Barak2020} and \citet{Dalzell2020} showed that random circuits in 1D and other parallel circuit architectures anticoncentrate (according to both \cref{def_antconc} and their definition in terms of the collision probability).
Dalzell et al.\ showed that logarithmic depth is also \emph{necessary} for the collision probability bound $\mathbb{E}_\mathcal{B}[Z] \leq a/2^n$, but this left open the possibility that sublogarithmic-depth random circuits anticoncentrate without a corresponding bound on the collision probability.
Our \cref{thm_no_anticonc} disproves this possibility and thus strengthens their no-go result at sublogarithmic depths.

\begin{lemma}\label{lem_smallvariance}
The variance $\sigma^2 := \mathbb{E}_\mathcal{B}[x^2]-\mathbb{E}_\mathcal{B}[x]^2$ of the lower bound $x$ in \cref{eq_lowerbound} satisfies $\sigma^2 \leq 2n$.
\end{lemma}

\subsection{Anticoncentration at large enough depth}
We now study anticoncentration properties of both noisy and noiseless circuits at depths logarithmic in $n$ or larger.
These results are established in terms of the collision probability defined in \cref{eq_collisionprob}, which is the second moment of the output probability.

\begin{thm}\label{thm_noisy_anticonc}
Haar random circuits with Pauli noise  anticoncentrate at least as fast as those without noise.
More specifically, we show that the output probability distribution for a Haar random circuit with Pauli noise is anticoncentrated if the noiseless output  satisfies $ 2^n\mathbb{E}_\mathcal{B} Z(U,\mathbb{I}) \leq c$ for some constant $c$.
\end{thm}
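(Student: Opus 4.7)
The plan is to reduce anticoncentration in the noisy setting to the hypothesized anticoncentration of the noiseless ensemble, using two ingredients: a convexity bound relating the noisy collision probability to the noiseless one, and a Paley--Zygmund-type second-moment argument.

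First, I would exploit that a Pauli noise channel is a convex mixture of unitary Pauli conjugations, so that the full noisy channel can be written as $\mathcal{N}_d(\rho) = \sum_E r(E)\, U_E \rho U_E^\dagger$. Here $E$ indexes Pauli error patterns across all noise locations in the circuit, $r(E)\ge 0$ sums to $1$, and $U_E$ denotes the original random unitary circuit $U$ with the fixed Pauli string dictated by $E$ inserted at the noise locations. The noisy output distribution is thus $p_x = \sum_E r(E)\,|\bra{x} U_E \ket{\oo}|^2$, and Jensen's inequality applied to the convex function $t\mapsto t^2$ gives the pointwise bound $p_x^2 \le \sum_E r(E)\,|\bra{x} U_E \ket{\oo}|^4$. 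Summing over $x$ produces $Z(U,\mathcal{E}) \le \sum_E r(E)\,Z(U_E,\mathbb{I})$.

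Next, I would argue that for each fixed pattern $E$, the random unitary $U_E$ has the same distribution under $\mathcal{B}$ as $U$ itself. In a parallel architecture, any Pauli inserted after a layer of two-qubit Haar-random gates can be absorbed into the following layer, since the Haar measure on $\mathrm{U}(4)$ is invariant under multiplication by a two-qubit Pauli operator. Iterating this through the circuit yields $\mathbb{E}_\mathcal{B}[Z(U_E,\mathbb{I})] = \mathbb{E}_\mathcal{B}[Z(U,\mathbb{I})]$, and averaging the previous bound gives $\mathbb{E}_\mathcal{B}[Z(U,\mathcal{E})] \le \mathbb{E}_\mathcal{B}[Z(U,\mathbb{I})]$. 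In particular, the hypothesis $2^n \mathbb{E}_\mathcal{B} Z(U,\mathbb{I}) \le c$ transfers directly to the noisy case.

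Finally, I would convert this second-moment bound into the anticoncentration statement of \cref{def_antconc} using Paley--Zygmund. The same Haar-invariance trick (now absorbing bit-flip Paulis that map $\ket{\oo}$ into $\ket{x}$ into either the first layer of two-qubit gates or into $U_E$) implies that the marginal distribution of $p_x$ under $\mathcal{B}$ is identical for every bitstring $x$. This gives $\mathbb{E}_\mathcal{B}[\po] = 1/2^n$ and $\mathbb{E}_\mathcal{B}[\po^2] = \mathbb{E}_\mathcal{B}[Z(U,\mathcal{E})]/2^n$. Paley--Zygmund then produces, for any $\alpha \in (0,1)$,
\begin{align*}
\Pr_\mathcal{B}\!\left[\po \ge \frac{\alpha}{2^n}\right] \ge (1-\alpha)^2\,\frac{\mathbb{E}_\mathcal{B}[\po]^2}{\mathbb{E}_\mathcal{B}[\po^2]} = \frac{(1-\alpha)^2}{2^n\,\mathbb{E}_\mathcal{B}[Z(U,\mathcal{E})]} \ge \frac{(1-\alpha)^2}{c},
\end{align*}
which is exactly the content of \cref{def_antconc} (for instance with $\alpha=1/2$ and anticoncentration constant $1/(4c)$). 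The main obstacle is conceptual rather than technical: one must recognize that Pauli noise preserves the Haar-invariant structure of the circuit ensemble, so that the noisy output is dominated in the second-moment sense by a convex mixture of noiseless outputs. Once this reduction is in place, Jensen and Paley--Zygmund do all the remaining work.
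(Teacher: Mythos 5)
Your proof is correct, but it reaches the key inequality $\mathbb{E}_\mathcal{B} Z(U,\mathcal{E})\le\mathbb{E}_\mathcal{B} Z(U,\mathbb{I})$ by a genuinely different route from the paper. The paper first passes from Haar to Clifford gates via the 2-design property and then proves a \emph{per-circuit} statement (\cref{lemma_clifford_Z}): for every fixed Clifford circuit, adding Pauli noise can only decrease the collision probability, $Z(U,\mathcal{E})\le Z(U,\mathbb{I})$. That lemma is proved with stabilizer machinery---commuting the noise to the end of the circuit, grouping the Pauli errors into syndrome classes, and using the flatness of stabilizer output distributions on their support. You instead stay with Haar gates throughout: you expand the Pauli channel as a convex mixture of unitary Pauli insertions, apply Jensen to get the pointwise bound $Z(U,\mathcal{E})\le\sum_E r(E)\,Z(U_E,\mathbb{I})$ for an \emph{arbitrary} circuit, and then use invariance of the Haar measure under multiplication by fixed local unitaries to identify $\mathbb{E}_\mathcal{B}[Z(U_E,\mathbb{I})]$ with $\mathbb{E}_\mathcal{B}[Z(U,\mathbb{I})]$; the concluding Paley--Zygmund step is identical in both arguments (and both rely on the same permutation symmetry of the output bitstrings to set $\mathbb{E}_\mathcal{B}[\po]=2^{-n}$). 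Your version is more elementary---no stabilizer formalism and no 2-design reduction---and it extends verbatim to any local noise that is a mixture of unitaries, but it only controls the collision probability in expectation; the paper's detour through Cliffords buys the stronger circuit-by-circuit monotonicity of \cref{lemma_clifford_Z}, which is of independent interest. One small point to patch in your absorption argument: the Pauli layer applied after the final layer of gates has no subsequent Haar gate to absorb it, but a Pauli acting immediately before a computational-basis measurement merely permutes the outcomes (up to phases) and hence leaves $Z(\cdot,\mathbb{I})$ unchanged, so the identification of the distributions of $U_E$ and $U$ at the level of the collision probability still goes through.
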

 
\begin{proof}
To prove the result we first need \cref{lemma_clifford_Z}, which shows that the collision probability for a noisy Clifford circuit increases upon removing the noise, i.e., $Z(U,\mathcal{E}) \leq Z(U,\mathbb{I})$.
The proof of the theorem then follows from the two-design property of the Clifford group through the inequalities
\begin{align} \label{eq:bound_ac1}
2^{-n} \le \mathbb{E}_\mathcal{B} Z(U,\mathcal{E})  &= \mathbb{E}_C Z(U,\mathcal{E}), \\ \label{eq:bound_ac2}
&\le\mathbb{E}_C Z(U,\mathbb{I}) = \mathbb{E}_\mathcal{B} Z(U,\mathbb{I}).
\end{align}
The first inequality in \cref{eq:bound_ac1} follows from a generic lower bound on $Z(U,\mathcal{E})$, while the second inequality in \cref{eq:bound_ac2} follows from \cref{lemma_clifford_Z}.

To prove that this bound on the collision probability implies anticoncentration, we can apply the Paley-Zygmund inequality for $0<\alpha < 1$
\begin{align}
\Pr \left[p_{00\ldots 0} \ge \frac{\alpha}{2^n} \right] &= \Pr \left[p_{00\ldots 0} \ge \alpha \mathbb{E}_\mathcal{B}[p_{00\ldots 0}] \right], \\ 
&\ge \frac{(1-\alpha)^2}{4^n\mathbb{E}_\mathcal{B}\left[p_{00\ldots0}^2 \right]} \ge \frac{(1-\alpha)^2}{c},
\end{align}
which converges to a number greater than $0$ in the limit $n\to \infty$.
Here, we used the fact that  
\begin{equation} 4^n\mathbb{E}_\mathcal{B}\left[ p_{00\ldots0}^2 \right] =2^n\mathbb{E}_\mathcal{B}\left[\sum_x p_x^2 \right]  = 2^n\mathbb{E}_\mathcal{B}[  Z(U,\mathcal{E})],
\end{equation}
and then applied \cref{eq:bound_ac2}.
\end{proof}
\Cref{thm_noisy_anticonc} implies that noisy Haar random circuits on  architectures that are reasonably well connected  anticoncentrate after $\Theta(\log n)$ depth since $2^n\mathbb{E}_\mathcal{B} Z(U,\mathbb{I})= 2 + o(1)$  after depth $\Theta(\log n)$ for such circuits \cite{Dalzell2020}\footnote{See \citet{Dalzell2020} for a sufficient set of connectivity conditions.}.
\Cref{thm_no_anticonc} also rules out anticoncentration at any sub-logarithmic depth for a Haar random circuit with Pauli noise.

We now present the  lemma for noisy Clifford circuits:
\begin{lemma}\label{lemma_clifford_Z}
 The collision probability  decreases when adding Pauli noise to a noiseless Clifford circuit, i.e., 
\begin{align}
Z(U,\mathcal{E}) \le Z(U,\mathbb{I}).
\end{align}
\end{lemma}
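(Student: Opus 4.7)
The plan is to use the Clifford structure to push every Pauli noise channel past the unitary layers to the end of the circuit, so that the noisy output state becomes a single effective Pauli channel applied to $U\ket{\oo}$, and then to identify the resulting noisy distribution as a convolution of the noiseless distribution with a classical bit-flip distribution, at which point a one-line convexity estimate finishes the job. The main ingredients are three elementary facts about Cliffords and Paulis: (i) if $V$ is Clifford then $V^\dag P V = \pm P'$ for some Pauli $P'$, (ii) the composition of two Pauli channels is again a Pauli channel, and (iii) for any Pauli decomposed in symplectic form as $E = \alpha_E Z^{a(E)} X^{b(E)}$ and any pure state $\ket{\psi}$, the overlap obeys $|\bra{x} E \ket{\psi}|^2 = |\braket{x \oplus b(E)}{\psi}|^2$, because the $Z$-string and the overall phase multiply $\bra{x}$ by a unit complex number that is killed by the modulus squared.

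Applying (i) and (ii) inductively across the $d$ layers of $\mathcal{N}_d = \mathcal{E}_d \circ \mathcal{C}_d \circ \cdots \circ \mathcal{E}_1 \circ \mathcal{C}_1$, I would write
\begin{equation*}
\mathcal{N}_d(\ketbra{\oo}) = \sum_{E \in \mathcal{P}} \pi(E)\, E\, U \ketbra{\oo} U^\dag E^\dag
\end{equation*}
for some probability distribution $\pi$ on $\mathcal{P}$, where $U = U_d \cdots U_1$. By (iii), letting $p_y = |\bra{y} U \ket{\oo}|^2$ and $q(b) = \sum_{E : b(E) = b} \pi(E)$, the noisy probabilities become the convolution $p^\text{noisy}_x = \sum_b q(b)\, p_{x \oplus b}$ over the group $\{0,1\}^n$ with XOR. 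Applying Jensen's inequality to the convex function $t \mapsto t^2$ then gives
\begin{equation*}
Z(U, \mathcal{E}) = \sum_x \Big( \sum_b q(b)\, p_{x \oplus b} \Big)^2 \le \sum_x \sum_b q(b)\, p_{x \oplus b}^2 = \sum_b q(b) \sum_y p_y^2 = Z(U, \mathbb{I}),
\end{equation*}
which is the claim.

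The only place that requires care is the commutation in the first step: one has to verify that the signs picked up when conjugating a Pauli through a Clifford layer are harmless (they cancel because each Pauli error acts on both sides of $\rho$), and that assembling these layer-wise shifts into a single distribution $\pi$ yields a bona fide probability distribution on $\mathcal{P}$. Once this bookkeeping is done, everything else is mechanical and architecture-independent; in particular, the argument holds instance-by-instance for each realization of the random Clifford gates, so no additional averaging is needed.
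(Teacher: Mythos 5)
Your proof is correct, and it diverges from the paper's argument in an instructive way after the common first step. Both proofs begin identically: conjugate the interleaved Pauli channels through the Clifford layers (signs cancel because each error acts on both sides of $\rho$, and compositions of Pauli channels are Pauli channels), reducing the state to a single effective Pauli channel $\mathcal{E}_U$ applied after the noiseless circuit. From there the paper leans on the stabilizer structure of $U\ket{\oo}$: it organizes the Pauli errors into syndrome classes $\vec{s}$, uses the fact that a stabilizer state's output probabilities take only the two values $0$ and $p_{\max}$ with exactly $p_{\max}^{-1}$ nonzero entries, and closes with the overlap bound $\sum_x f(x,\vec{s},U) f(x,\vec{\ell},U) \le \sum_x f(x,\vec{s},U)$, which in passing computes $Z(U,\mathbb{I}) = p_{\max}$ exactly. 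You instead observe that the terminal Pauli channel acts on the output distribution purely as a classical convolution, $p^{\text{noisy}}_x = \sum_b q(b)\, p_{x\oplus b}$, since the $Z$-part and phase of each error are killed by the modulus squared, and then a single application of Jensen's inequality for $t\mapsto t^2$ gives $Z(U,\mathcal{E}) \le Z(U,\mathbb{I})$. Your route is more elementary and strictly more general in its second half: it proves that a Pauli channel applied at the output of \emph{any} circuit (Clifford or not) cannot increase the collision probability, so the Clifford assumption is used only to push the noise to the end, whereas the paper's flatness argument is intrinsically tied to stabilizer states. What the paper's version buys in exchange is the explicit identity $Z(U,\mathbb{I}) = p_{\max}(U,\mathbb{I})$ and the syndrome-class bookkeeping, which is natural from an error-correction viewpoint but is not needed for the inequality itself.
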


\begin{proof}

The defining properties of Clifford circuits is that they send Pauli group elements to Pauli group elements.  Another result we need is that the composition of two Pauli noise channels is also a Pauli noise channel.  We can use these two facts to move all the noise channels past the Clifford gates to arrive at the expression
\begin{align}\mathcal{N}_d(\ket{0}\bra{0}) = (\mathcal{E}_U \circ \mathcal{C}) (\ketbra{\oo}),
\end{align}
where $\mathcal{E}_U$ is a new Pauli noise channel with a modified distribution $q(E)$ and $\mathcal{C} = \mathcal{C}_d \circ \cdots \circ \mathcal{C}_1$ is the noiseless Clifford circuit.  Now we are left to bound the expression
\begin{align}
Z(U,\mathcal{E}) &= \sum_x \left| \sum_{E \in \mathcal{P}} q(E)  \bra{x} E \left[  \prod_{i=1}^n \frac{\mathbb{I} + g_i}{2} \right] E^\dag \ket{x} \right|^2, \nonumber \\
&= \sum_x \left| \sum_{\vec{s}} q_{\vec{s}} \, \bra{x}  \prod_{i=1}^n \frac{\mathbb{I} + (-1)^{s_i} g_i }{2}  \ket{x} \right|^2 \label{eqn:zn1},
\end{align}
where $g_i = U_d \cdots U_1 Z_i U_1^\dag \cdots U_d^\dag$ are stabilizer generators for the evolved initial state under the noiseless circuit and $Z_i$ is the Pauli $z$-operator on site $i$.  In the second equality, we have organized the sum into syndrome classes defined by the anticommutation pattern $\vec{s}$ of the Pauli group elements $E$ with the stabilizer generating set $\{ g_i \}$ using the definition
\begin{align}q_{\vec{s}} := \sum_{E \in \mathcal{P} \, s.t.\, (-1)^{s_i} = [[E,g_i]]} q(E).
\end{align}
Here, $[[A,B]] = \Tr[A B A^{-1} B^{-1}]/2^n$ is the scalar commutator. 
To bound Eq.~\eqref{eqn:zn1}, we first use properties of stabilizer states to evaluate the measurement probabilities as
\begin{align}\bra{x}  \prod_{i=1}^n \frac{\mathbb{I} + (-1)^{s_i} g_i }{2} \ket{x} = p_{\max}(U,\mathbb{I}) f(x,\vec{s},U),
\end{align}
where $p_{\max}(U,\mathbb{I}) = \max_x p_x(U,\mathbb{I})$, $p_x(U,\mathbb{I}):= \abs{\bra{x} U \ket{\oo}}^2$, and $f(x,\vec{s},U)$ is either 0 or 1 and satisfies the sum rule $\sum_x f(x,\vec{s},U) = p_{\max}^{-1}(U,\mathbb{I})$ for every $\vec{s}$.  As a result,
\begin{align}
Z(U,\mathcal{E}) &= p^2_{\max}(U,\mathbb{I}) \sum_{\vec{s},\vec{\ell}} q_{\vec{s}} \, q_{\vec{\ell}} \sum_x f(x,\vec{s},U) f(x,\vec{\ell},U), \\
&\le p_{\max}^2(U,\mathbb{I}) \sum_{\vec{s},\vec{\ell}} q_{\vec{s}} \, q_{\vec{\ell}} \sum_x f(x,\vec{s},U), \\
&= p_{\max}(U,\mathbb{I}) = Z(U,\mathbb{I}), 
\end{align}
where we used the fact that $f(x,\vec{s},U) f(x,\vec{\ell},U) \le f(x,\vec{s},U)$, $\sum_{\vec{s}} q_{\vec{s}} = 1$, and the property of Clifford circuits that all nonzero probabilities are equal.
\end{proof}

\section*{Acknowledgments}
We thank Alex Dalzell for helpful and inspiring discussions. We thank Igor Boettcher and Grace Sommers for helpful comments on the manuscript. M.J.G., A.V.G., and P.N.~acknowledge support  from the National Science Foundation (QLCI grant OMA-2120757). 
A.D., A.V.G., P.N., and O.S.~acknowledge funding by DoE QSA, DoE ASCR Accelerated Research in Quantum Computing program (award No.~DE-SC0020312), DoE ASCR Quantum Testbed Pathfinder program (award No.~DE-SC0019040), NSF PFCQC program, AFOSR, U.S.~Department of Energy Award No.~DE-SC0019449, ARO MURI, AFOSR MURI, and DARPA SAVaNT ADVENT. 
A.\,D.\, also acknowledges support from the National Science Foundation RAISE-TAQS 1839204.
B.F.~acknowledges support from AFOSR (YIP number FA9550-18-1-0148 and
FA9550-21-1-0008). This material is based upon work partially
supported by the National Science Foundation under Grant CCF-2044923
(CAREER) and by the U.S.~Department of Energy, Office of Science,
National Quantum Information Science Research Centers.
The Institute for Quantum Information and Matter is an NSF Physics Frontiers Center PHY-1733907.

\onecolumngrid
\appendix

\section{Proof of \cref{lem_collisionprob_upperbound}} \label{apx_upperbound}
Here, we introduce the concepts that go into proving \cref{lem_collisionprob_upperbound}. 

\medskip

If $X$ is a random variable in $\{0,1\}^n$ denoting the measurement outcome of $n$ qubits after a unitary $U \sim \mathcal{B}$, the collision probability of the random-circuit architecture is defined as
\begin{equation}
Z = \mathbb{E}_{\mathcal{B}}\left[\sum_{x \in \{0, 1\}^n} \text{Pr}(X=x)^2\right] = \mathbb{E}_{\mathcal{B}}\left[\sum_{x \in \{0, 1\}^n} p_U(x)^2 \right],
\end{equation}
where $p_U(x)$ is the probability that the measurement result is $x$.
If there is at least one gate for each qubit in a parallel circuit architecture with Haar-random gates, all measurement outcomes are equally random and, thus, there is a symmetry over them.
\begin{equation}
    Z = \mathbb{E}_{\mathcal{B}}\left[\sum_{x \in \{0, 1\}^n} p_U(x)^2 \right] = 2^n\mathbb{E}_{\mathcal{B}}\left[ p_U({0}^n)^2\right],
\end{equation}
where ${0}^n$ denotes the state $\ket{00\ldots 0}$.
Assuming that the input state is also ${0}^n$, the probability of measuring ${0}^n$ after the circuit is given by $\Tr(\ketbra{{0}^n} U\ket{0^n}\bra{0^n}U^\dagger)$. To get the second moment of the probability distribution, we consider two copies of the circuit acting on two copies of the input state. Since the trace obeys $\Tr(A\otimes B)=\Tr(A)\Tr(B)$, we get
\begin{align}
Z = 2^n\mathbb{E}_{\mathcal{B}}\left[ p_U({0})^2\right]  &= 2^n \mathbb{E}_\mathcal{B}\Tr[\left(\ketbra{{0}^n}\right)^{\otimes 2} U^{\otimes 2}\left(\ketbra{0^n}\right)^{\otimes 2}(U^\dagger)^{\otimes 2}], \nonumber \\
&=  2^n \Tr\left[\left(\ketbra{0^n}\right)^{\otimes 2} \mathbb{E}_{\mathcal{B}}\left[U^{\otimes 2}\left(\ketbra{0^n}\right)^{\otimes 2}(U^\dagger)^{\otimes 2}\right]\right]. \label{eq:def-collision-probability}
\end{align}
For convenience, we denote the two-copy, Haar-averaged channel over $k$ qubits as $M_{U_k}$:
\begin{equation}
    M_{U_k}[\rho] = \mathbb{E}_{\mathcal{B}}\left[{U_k}^{\otimes 2}\rho({U_k}^\dagger)^{\otimes 2}\right],
\end{equation}
where $U_k$ is a unitary acting on $k$ qubits. To study noisy evolution, we define a dephasing noise of strength $q$ given by the noise channel
\begin{equation}
    \mathcal{E}[\rho] = (1-q)\rho + q Z \rho Z.
\end{equation}

\setcounter{lemma}{1}
\begin{lemma}
For a noisy Haar random circuit of depth $d$ on any parallel circuit architecture with heralded dephasing noise at rate $p$ with the dephasing parameter $q$, we have the upper bound on the expected collision probability
\begin{equation}
\mathbb{E}_{\mathcal{B}}[Z] = \mathbb{E}_{\mathcal{B}}\Big[ \sum_x p_x^2 \Big]\leq 
2^{-n} \exp[\frac{n}{3}e^{-\gamma pd}],
\end{equation}
where $\gamma=8 q (1-q)/3.$
\end{lemma}
\setcounter{lemma}{4}
\begin{proof} 
It is convenient to separate out the average over random locations of the dephasing events in the heralded dephasing model from the ensemble $\mathcal{B}$ (an ensemble of both gates and noise locations). We denote the ensemble of gates for a fixed set of locations $L$ by $\mathcal{B}_L$ and averages over the noise locations by $\mathbb{E}_\mathcal{L}$.
We will show in Lemma \ref{lem:modified-circuit} that given a random circuit ensemble $\mathcal{B}_L$ on a parallel circuit architecture  with heralded dephasing noise and a fixed set of locations $L$, there exists another circuit ensemble $\mathcal{B}_L'$, with the gates drawn at random independently of $L$, composed solely of  single-qubit gates and SWAP gates with an average collision probability $\mathbb{E}_{\mathcal{B}_L'}[Z]$ greater than or equal to the average collision probability of the original circuit $\mathbb{E}_{\mathcal{B}_L}[Z]$.  
Note, for every circuit in the new ensemble $C \in \mathcal{B}_L'$, we can append a network of SWAP gates to $C$ to return all qubits to their original positions.
Adding these SWAP gates does not change the collision probability, since these gates only permute the support of the final probability distribution. We further break up the ensemble $\mathcal{B}_L'$ into an ensemble of single-qubit Haar random gates $\mathcal{B}_{1}'$ and the random SWAP network $\mathcal{B}_{\rm SWAP}'$.  
The distribution of gates in the ensemble $\mathcal{B}_1' \cup \mathcal{B}_{\rm SWAP}'$ is defined by taking every two-qubit gate for the circuits in $\mathcal{B}_L$ and replacing it with a SWAP gate  or identity gate with equal probability on those sites, followed by Haar-random single-site gates on the two qubits.
The joint distribution over SWAP networks and single-site gates is also conditionally independent, allowing us to commute averages over single-site gates, SWAP networks, and noise locations with each other.

Fixing a realization of SWAP gates and noise locations, we can then follow the path of a qubit, count the total number of dephasing events in that path and merge consecutive single-qubit gates without intervening noise locations. Since we are working with a parallel circuit architecture, there is never a case of consecutive dephasing events (there is always a single-qubit gate after each dephasing event).
Let $t_i$ be the number of dephasing events on the path of qubit $i$. For the heralded dephasing model, we can write the random variable $t_i$ as a sum
\begin{equation}
    t_i = \sum_{j=1}^d x_{ij},
\end{equation}
where $x_{ij}\in\{0,1\}$ are independent, identically distributed Bernoulli random variables for each $i$ and $j$ with parameter $p$. 
We have $\mathbb{E}_{\mathcal{L}}[x_{ij}]=p$ and $\Pr_\mathcal{L}[x_{ij} = a, x_{kl} =b] = \Pr_\mathcal{L}[x_{ij} = a]\Pr_\mathcal{L}[x_{kl} =b]$ for every $ij \ne kl$.

After averaging over $\mathcal{B}_1'$ using Lemma \ref{lem:single-qubit}, the final two-copy circuit-averaged state is given by
\begin{equation}
\bigotimes_{i=1}^n \left(M_{U_1} \circ \underbrace{(\mathcal{E}\circ M_{U_1})   \cdots    \circ (\mathcal{E}\circ M_{U_1})}_{t_i}\right) [\ket{0^n}\bra{0^n}^{\otimes 2}] \\
=  \bigotimes_{i=1}^n \left[\left(\frac{1}{12}(3-\beta^{t_i})\right)I + \frac{1}{6}\beta^{t_i}S\right],
\end{equation}
where $\beta=1-8q(1-q)/3$, and $I$ and $S$ are the $4 \times 4$ identity and SWAP operators, respectively. Using \cref{eq:def-collision-probability} and noting that $\Tr(I\ketbra{0}^{\otimes 2}) = \Tr(S\ketbra{0}^{\otimes 2})=1$, the average collision probability for a fixed SWAP network and set of noise locations equals
\begin{equation}
\mathbb{E}_{\mathcal{B}_{1}'}[Z] = 2^n\prod_{i=1}^n \left[\frac{1}{12}(3-\beta^{t_i}) + \frac{1}{6}\beta^{t_i}\right] = 2^n\prod_{i=1}^n \frac{1}{2^2}\left[1+\frac{1}{3}\beta^{t_i}\right] = \frac{1}{2^n}\prod_{i=1}^n \left[1+\frac{1}{3}\beta^{t_i}\right].
\end{equation}
We now  average over the noise locations using our assumption that the noise locations are uncorrelated with each other and the realization of the SWAP network.
\begin{equation}
    \mathbb{E}_{\mathcal{L}}\mathbb{E}_{\mathcal{B}_{1}'}[Z] = \mathbb{E}_{\substack{t_i \\ 1\leq i \leq n}}\left[{\mathbb{E}_{\mathcal{B}_{1}'}}[Z]\right]  = \frac{1}{2^n}\prod_{i=1}^n \left[1+\mathbb{E}_{t_i}\left[\frac{1}{3}\beta^{t_i}\right]\right] = \frac{1}{2^n}\prod_{i=1}^n \left[1+\frac{1}{3}\prod_{j=1}^d\mathbb{E}_{x_{ij}}\left[\beta^{x_{ij}}\right]\right].
    \label{eq:location-average}
\end{equation}
Using the fact that $x_{ij}$ are Bernoulli random variables, we can compute the expectation $\mathbb{E}_{x_{ij}}\left[\beta^{x_{ij}}\right]$ as
\begin{equation}
    \mathbb{E}_{x_{ij}}\left[\beta^{x_{ij}}\right] = p\beta^1 + (1-p)\beta^0 = p\beta + (1-p) = 1-p(1-\beta) = 1-p\gamma,
\end{equation}
where we have defined $\gamma = 1-\beta$. Inserting this expectation in \eqref{eq:location-average} and using  \cref{lem:modified-circuit}, we get a bound on the average collision probability 
\begin{equation}
     \mathbb{E}_{\mathcal{B}}[Z] \le \mathbb{E}_{\mathcal{B}'}[Z]=\mathbb{E}_{\mathcal{B}'_{\rm SWAP}}\mathbb{E}_{\mathcal{L}}\mathbb{E}_{ \mathcal{B}_{1'}}[Z] 
     = \frac{1}{2^n}\left[1+\frac{1}{3}(1-p\gamma)^d\right]^n  \le \frac{1}{2^n}\bigg[ 1+\frac{1}{3}e^{-\gamma p d} \bigg]^n \le 2^{-n} \exp\bigg[ \frac{n}{3} e^{-\gamma p d} \bigg].
\end{equation}
Here, we used $\mathbb{E}_{\mathcal{B}}[Z]=\mathbb{E}_\mathcal{L} \mathbb{E}_{\mathcal{B}_L}[Z] \le   \mathbb{E}_{\mathcal{B}'}[Z] := \mathbb{E}_\mathcal{L} \mathbb{E}_{\mathcal{B}_L'}[Z]  $ in applying \cref{lem:modified-circuit} in the first inequality.  We also used the fact that, for any $x>0$, $1-x \le e^{-x}$ in the second inequality and $1+x \le e^x $ in the third inequality.   

\end{proof}

\begin{lemma} Consider a random circuit consisting of $k$ dephasing error channels of strength $q$ sandwiched between $k+1$ single-qubit Haar-random gates (denoted by $U_1$). When this circuit acts on two copies of a single-qubit, the circuit-averaged state is given by
\begin{equation}
    M_{U_1} \circ \underbrace{(\mathcal{E}\circ M_{U_1})   \cdots    \circ (\mathcal{E}\circ M_{U_1})}_{k}  [\ketbra{0}^{\otimes 2}] = \frac{1}{12}(3-\beta^k)I + \frac{1}{6}\beta^{k}S,
\end{equation}
where $\beta=1-\frac{8}{3}q(1-q)$, and $I$ and $S$ are the $4 \times 4$ identity and SWAP operators, respectively.
\label{lem:single-qubit}
\end{lemma}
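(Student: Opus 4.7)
The plan is to exploit the fact that $M_{U_1}$ is the two-copy twirl under the single-qubit Haar measure, so by Schur--Weyl duality its image on the $4$-dimensional two-copy space lies in the commutant $\mathrm{span}\{I,S\}$ of $\{U^{\otimes 2}:U\in\mathrm{U}(2)\}$. Using the invariance identities $\Tr(M_{U_1}[X]) = \Tr(X)$ and $\Tr(S\,M_{U_1}[X]) = \Tr(SX)$ (the second because $S$ commutes with $U^{\otimes 2}$), a $2\times 2$ linear solve gives the explicit formula
\begin{equation*}
M_{U_1}[X] \;=\; \tfrac{1}{6}\bigl(2\Tr(X) - \Tr(SX)\bigr)\,I \;+\; \tfrac{1}{6}\bigl(2\Tr(SX) - \Tr(X)\bigr)\,S.
\end{equation*}
Specializing to $X = \ketbra{0}^{\otimes 2}$, where $\Tr(X) = \Tr(SX) = 1$, yields the base case $M_{U_1}[\ketbra{0}^{\otimes 2}] = (I+S)/6$, which matches the claim at $k=0$.

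Next, I would reduce the iteration to a $2\times 2$ linear recursion on $\mathrm{span}\{I,S\}$. Since $\mathcal{E}$ is unital, its two-copy action obeys $\mathcal{E}^{\otimes 2}[I]=I$. For the swap, expanding $S = \tfrac{1}{2}(II+XX+YY+ZZ)$ and using $\mathcal{E}(X)=(1-2q)X$, $\mathcal{E}(Y)=(1-2q)Y$, $\mathcal{E}(Z)=Z$, one finds
\begin{equation*}
\mathcal{E}^{\otimes 2}[S] \;=\; \tfrac{1}{2}(1-\lambda)(I+ZZ) + \lambda\,S, \qquad \lambda := (1-2q)^2.
\end{equation*}
Applying the twirling formula to $ZZ$ (using $\Tr(ZZ)=0$ and $\Tr(S\cdot ZZ)=\Tr(Z^2)=2$) gives $M_{U_1}[ZZ]=-\tfrac{1}{3}I + \tfrac{2}{3}S$, so
\begin{equation*}
M_{U_1}\circ\mathcal{E}^{\otimes 2}[S] \;=\; \tfrac{1-\lambda}{3}\,I + \tfrac{1+2\lambda}{3}\,S \;=\; \tfrac{1-\lambda}{3}\,I + \beta\,S,
\end{equation*}
with the algebraic identity $(1+2\lambda)/3 = 1 - \tfrac{8}{3}q(1-q) = \beta$ following directly from $\lambda=(1-2q)^2$.

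With these two ingredients, the composite map $M_{U_1}\circ\mathcal{E}^{\otimes 2}$ acts on the ordered basis $(I,S)$ as an upper-triangular $2\times 2$ matrix with diagonal entries $1$ and $\beta$ and upper off-diagonal entry $(1-\lambda)/3$. Writing the state at step $k$ as $a_k I + b_k S$ and starting from $(a_0,b_0)=(1/6,1/6)$, the $S$-coefficient is immediately $b_k = \beta^k/6$, while $a_k = a_{k-1} + \tfrac{1-\lambda}{3}b_{k-1}$ telescopes to $a_k = \tfrac{1}{6} + \tfrac{1-\lambda}{18}\sum_{j=0}^{k-1}\beta^j$. Using $1-\beta = \tfrac{8}{3}q(1-q)$ and $1-\lambda = 4q(1-q)$, the ratio $(1-\lambda)/(1-\beta) = 3/2$ collapses the geometric sum to $a_k = (3-\beta^k)/12$, which is exactly the stated expression.

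The only conceptual subtlety, and hence the main technical step, is that $\mathrm{span}\{I,S\}$ is \emph{not} invariant under $\mathcal{E}^{\otimes 2}$ alone — the $ZZ$ component of the Pauli expansion of $S$ leaks out of the commutant — but the subsequent twirl $M_{U_1}$ projects the leakage back onto $\mathrm{span}\{I,S\}$, making the composite channel closed and (upper) triangularizable with the clean eigenvalues $1$ and $\beta$. Once this structure is identified, the lemma reduces to well-known Haar twirling identities plus a one-line geometric sum.
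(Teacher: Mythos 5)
Your proof is correct and follows essentially the same route as the paper's: both reduce the evolution to a two-dimensional recursion on $\mathrm{span}\{I,S\}$ with the upper-triangular update $(a,b)\mapsto(a+\alpha b,\,\beta b)$, $\alpha=\tfrac{4}{3}q(1-q)$, and collapse the resulting geometric sum. The only cosmetic differences are that you reassociate the composition as $(M_{U_1}\circ\mathcal{E})^{k}\circ M_{U_1}$ instead of invoking the paper's gate-doubling identity $M_{U_1}=M_{U_1}\circ M_{U_1}$, and you evaluate $\mathcal{E}^{\otimes 2}[S]$ via the Pauli expansion of $S$ rather than by the conjugation identities $(ZI)S(ZI)$ used in the paper.
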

\begin{proof} We first make an observation that $M_{U_1}[\rho] = M_{U_1}\circ M_{U_1}[\rho]$, that is, one can split a Haar-random gate into two Haar-random gates without changing the statistics. In the circuit described above, leaving the two terminal unitary gates intact, we split all inner gates into two. This lets us treat the circuit as a repeating sequence of  $n$ units of the composite channel $\tilde{M}_{U_1, \mathcal{E}} = M_{U_1}\circ \mathcal{E} \circ M_{U_1}$.

From Ref.~\cite{Dalzell2020}, we know the following:
\begin{equation}
    M_{U_1}[\sigma] = \frac{1}{3} \left(\Tr(\sigma) - 2^{-1}\Tr(\sigma S)\right)I +  \frac{1}{3} \left(\Tr(\sigma S) - 2^{-1}\Tr(\sigma)\right)S.
    \label{eq:single-qubit-channel}
\end{equation}
If we follow this gate by a dephasing error channel, we get
\begin{align*}
     \mathcal{E} \circ  M_{U_1}[\sigma] &=  \frac{1}{3} \left(\Tr(\sigma) - 2^{-1}\Tr(\sigma S)\right)\mathcal{E}[I] +  \frac{1}{3} \left(\Tr(\sigma S) - 2^{-1}\Tr(\sigma)\right)\mathcal{E}[S], \\
      &= \frac{1}{3} \left(\Tr(\sigma) - 2^{-1}\Tr(\sigma S)\right)I +  \frac{1}{3} \left(\Tr(\sigma S) - 2^{-1}\Tr(\sigma)\right)((1-q)^2 S + 2q(1-q) (ZI)S(ZI)  + q^2(ZZ)S(ZZ)), \\ 
      &= \frac{1}{3} \left(\Tr(\sigma) - 2^{-1}\Tr(\sigma S)\right)I +  \frac{1}{3} \left(\Tr(\sigma S) - 2^{-1}\Tr(\sigma)\right)\left(((1-q)^2 + q^2)S + 2q(1-q) (ZI)S(ZI)\right).
\end{align*}
We follow this channel by another single-qubit random gate to finish the composite block. First we observe that $M_{U_1}[I] = I$ and $M_{U_1}[S]=S$. Similarly using \eqref{eq:single-qubit-channel} together with the fact that $\Tr[S]=2, \Tr[(IZ)S(IZ)S]=0$,
\begin{align*}
    M_{U_1}[(IZ)S(IZ)] &= \frac{2}{3}I -  \frac{1}{3}S.
\end{align*}
The composite channel thus gives
\begin{align}
    \tilde{M}_{U_1, \mathcal{E}}[\sigma] &= \frac{1}{3} \left(\Tr(\sigma) - 2^{-1}\Tr(\sigma S)\right) M_{U_1}[I] +  \frac{1}{3} \left(\Tr(\sigma S) - 2^{-1}\Tr(\sigma)\right)\left(((1-q)^2 + q^2)M_{U_1}[S] + 2q(1-q) M_{U_1}[(ZI)S(ZI)]\right), \nonumber \\
    &=  \frac{1}{3} \left(\Tr(\sigma) - 2^{-1}\Tr(\sigma S)\right)I +  \frac{1}{3} \left(\Tr(\sigma S) - 2^{-1}\Tr(\sigma)\right)\left(((1-q)^2 + q^2)S + 2q(1-q)\left(\frac{2}{3}I -  \frac{1}{3}S \right) \right), \nonumber \\
    &= \frac{1}{3} \left(\Tr(\sigma) - 2^{-1}\Tr(\sigma S) + \underbrace{\frac{4}{3}q(1-q)}_{\alpha}\left(\Tr(\sigma S) - 2^{-1}\Tr(\sigma )\right)\right)I +  \frac{1}{3} \left(\Tr(\sigma S) - 2^{-1}\Tr(\sigma )\right)\underbrace{\left(1-\frac{8}{3}q(1-q)\right)}_{\beta}S, \nonumber \\
    &= \frac{1}{3} \left(\Tr(\sigma) - 2^{-1}\Tr(\sigma S) + \alpha\left(\Tr(\sigma S) - 2^{-1}\Tr(\sigma )\right)\right)I +  \frac{1}{3} \left(\Tr(\sigma S) - 2^{-1}\Tr(\sigma)\right)\beta S.
    \label{eq:composite-channel}
\end{align}
The composite sum returns a state of the form $a I + b S$. Acting on this state with other composite blocks only changes the coefficients $a$ and $b$. In fact, we can work out exactly how $a$ and $b$ change after each block. Knowing that $\Tr(aI+bS) = 4a + 2b$ and $\Tr(S(aI + bS)) = 2a+4b$, we get
\begin{align}
    \tilde{M}_{U_1, \mathcal{E}}[aI + bS] &= \frac{1}{3} \left(3a+ \alpha(3b)\right)I +  \frac{1}{3}(3b)\beta S = (a + \alpha b)I + (\beta b)S.
    \label{eq:composite-channel-coefficients}
\end{align}
The first composite acts on the state $\ketbra{0}^{\otimes 2}$. Knowing that $\Tr[\ketbra{0}^{\otimes 2}]= \Tr(S\ketbra{0}^{\otimes 2})=1$, and using \cref{eq:composite-channel},
\begin{align}
\tilde{M}_{U_1, \mathcal{E}}[\ketbra{0}^{\otimes 2}] &=  \frac{1}{3} \left(1 - 2^{-1} + \alpha\left(1 - 2^{-1}\right)\right)I +  \frac{1}{3} \left(1 - 2^{-1}\right)\beta S  = \frac{1}{6} \left(1 + \alpha \right)I +  \frac{1}{6}\beta S.
\end{align}
We take this state and apply another $k-1$ composite channels (since there are $k$ in total). We can calculate the final state recursively using \cref{eq:composite-channel-coefficients}:
\begin{equation}
   \tilde{M}_{U_1, \mathcal{E}} \circ \cdots \circ \tilde{M}_{U_1, \mathcal{E}}[\ketbra{0}^{\otimes 2}] = \frac{1}{6}\left[1+\alpha+ \alpha\sum_{i=1}^{k-1} \beta^i\right]I + \frac{1}{6}\beta^{k}S  = \frac{1}{12}(3-\beta^k) I + \frac{1}{6}\beta^k S.
\end{equation}
This proves the lemma.
\end{proof}

It now remains to prove the following lemma, which lets us put a bound on the average collision probability in \cref{lem_collisionprob_upperbound}.

\begin{lemma}
Consider a random quantum circuit ensemble on a parallel architecture, $\mathcal{B}_L$, with Haar-random two-qubit gates and heralded dephasing noise with a fixed set of noise locations $L$.
There is a procedure to obtain another circuit ensemble $\mathcal{B}_L'$ with gates drawn randomly independently of $L$, composed solely of noisy single-qubit channels and SWAP gates, with an equal or higher average collision probability, i.e., $\mathbb{E}_{\mathcal{B}_L}[Z] \leq \mathbb{E}_{\mathcal{B}_L'}[Z]$.
\label{lem:modified-circuit}
\end{lemma}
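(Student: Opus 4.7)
The plan is to apply the statistical-mechanics mapping for Haar-averaged second moments developed by \citet{Dalzell2020}. After Haar averaging, the two-copy state evolution is confined to the subspace $\{I,S\}^{\otimes n}$, where $I$ and $S$ denote the identity and two-copy swap on each qubit. Expanding $\rho = \sum_\sigma c_\sigma \sigma$ and using $\ket{0}\bra{0}^{\otimes 2} = \tfrac{1}{6}(I+S)$ on each site, one writes $\mathbb{E}[Z] = 2^n \mathcal{S}(\rho_{\rm out})$ with $\mathcal{S}(\rho) := \sum_\sigma c_\sigma = \Tr[\ket{0\ldots 0}\bra{0\ldots 0}^{\otimes 2}\rho]$. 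The initial state $\rho_0^{\otimes 2} = (1/6^n)\sum_\sigma \sigma$ lies in the positive cone of this subspace.

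The key single-gate comparison follows from an explicit Weingarten computation. On the $\{I,S\}^{\otimes 2}$ subspace of a gate pair, the two-qubit Haar channel acts as $M_{U_2}[II]=II$, $M_{U_2}[SS]=SS$, and $M_{U_2}[IS]=M_{U_2}[SI]=\tfrac{2}{5}(II+SS)$, strictly decreasing $\mathcal{S}$ by $\tfrac{1}{5}(c_{IS}+c_{SI})$. The replacement channel $M_{RS+SH}$ --- a random SWAP-or-identity (each with probability $\tfrac{1}{2}$) followed by independent single-qubit Haar gates on the two involved sites --- instead gives $M_{RS+SH}[II]=II$, $M_{RS+SH}[SS]=SS$, and $M_{RS+SH}[IS]=M_{RS+SH}[SI]=\tfrac{1}{2}(IS+SI)$, which exactly preserves $\mathcal{S}$. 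Hence, for any input in the positive cone of $\{I,S\}^{\otimes 2}$, the replacement weakly increases $\mathcal{S}$ and therefore $\mathbb{E}[Z]$ at the single-gate level.

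The main technical obstacle is to propagate this gate-by-gate inequality through the full noisy depth-$d$ circuit, since heralded dephasing takes $S$ out of the $\{I,S\}$ subspace by producing the auxiliary operator $M_{\rm noise} := \sum_{ij}(-1)^{i+j}\ket{ij}\bra{ji}$. I would proceed by induction over the two-qubit gate locations, defining hybrid ensembles $\mathcal{B}_L^{(k)}$ in which $k$ of the $M_{U_2}$ gates have been replaced by $M_{RS+SH}$. The difference $\mathbb{E}[Z \mid \mathcal{B}_L^{(k+1)}] - \mathbb{E}[Z \mid \mathcal{B}_L^{(k)}]$ then equals the overlap of $\ket{0\ldots 0}\bra{0\ldots 0}^{\otimes 2}$ with the downstream propagation of a residual operator proportional to $\tfrac{1}{2}(IS+SI) - \tfrac{2}{5}(II+SS)$ on the replaced pair. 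The helpful identity $M_{U_2}[\tfrac{1}{2}(IS+SI) - \tfrac{2}{5}(II+SS)] = 0$ implies that any further $M_{U_2}$ gate acting on the same pair annihilates the residual, while dephasing --- being self-adjoint and fixing $\ket{0}\bra{0}^{\otimes 2}$ on each site --- contributes non-negatively to $\mathcal{S}$ on inputs in $\{I,S\}^{\otimes n}$. A finite-dimensional calculation of the layer transfer matrices on the enlarged $\{I,S,M_{\rm noise}\}^{\otimes 2}$ basis per pair of sites then closes the induction, yielding $\mathbb{E}_{\mathcal{B}_L}[Z] \le \mathbb{E}_{\mathcal{B}'_L}[Z]$. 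By construction $\mathcal{B}'_L$ consists of SWAP/identity gates and single-qubit Haar gates drawn from a fixed distribution independent of the noise locations $L$, as required for the subsequent analysis via \cref{lem:single-qubit}.
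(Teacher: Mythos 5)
Your framework is the right one---the $\{I,S\}^{n}$ configuration expansion of \citet{Dalzell2020}, the single-gate comparison between $M_{U_2}[IS]=M_{U_2}[SI]=\tfrac{2}{5}(II+SS)$ (total outgoing weight $4/5$) and the SWAP-or-identity-plus-single-qubit-Haar replacement (total outgoing weight $1$), and a gate-by-gate hybrid replacement. But the two steps you defer are exactly where the work lies, and as sketched they do not close. The ``main technical obstacle'' you identify---that dephasing produces $(Z\otimes I)S(Z\otimes I)$ and exits the $\{I,S\}$ span---is dissolved rather than confronted in the paper: since one may WLOG place a single-qubit Haar average on every leg entering a two-qubit gate, each dephasing event is immediately followed by a single-qubit Haar average, and $M_{U_1}\circ\mathcal{E}$ maps $S\mapsto \alpha I+\beta S$ with $\alpha=\tfrac{4}{3}q(1-q)\ge 0$ and $\beta=1-\tfrac{8}{3}q(1-q)\ge 0$. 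Everything therefore stays in the non-negative cone of $\{I,S\}^{n}$ configurations and no enlarged basis is needed. The paper then bundles each two-qubit gate with its outgoing noise into three types (zero, one, or two dephased legs) and compares, type by type, the total outgoing weight from each of $II,IS,SI,SS$ for the original versus replaced gate: $\tfrac{4}{5}$ vs $1$, $\tfrac{2}{5}(1+\alpha+\beta)$ vs $\tfrac{1}{2}(1+\alpha+\beta)$, and $\tfrac{2}{5}\bigl(1+(\alpha+\beta)^2\bigr)$ vs $\alpha+\beta$.

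Second, your mechanism for propagating the single-gate inequality downstream is flawed. The identity $M_{U_2}\bigl[\tfrac{1}{2}(IS+SI)-\tfrac{2}{5}(II+SS)\bigr]=0$ annihilates the residual only if the \emph{same pair} of sites is acted on by the next two-qubit gate; in a parallel architecture the qubits are generally re-paired every layer, so the residual survives and spreads. Moreover the residual has negative coefficients on $II$ and $SS$, so non-negativity of the downstream transfer coefficients does not by itself force the final overlap with $(\ketbra{0^n})^{\otimes 2}$ to be non-negative: one would have to control the relative downstream ``closing weights'' of the four configurations, which is precisely the content you leave to an unspecified ``finite-dimensional calculation on the enlarged basis.'' The paper sidesteps this entirely by comparing total trajectory weights with the gate under replacement treated as the terminal gate, where every configuration closes with weight $1$ against $(\ketbra{0^n})^{\otimes 2}$ and the non-negativity of all trajectory weights licenses a termwise comparison, and then iterating the replacement from the input onward. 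Without either that device or an explicit proof that your propagated residual has non-negative overlap, the induction does not close.
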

\begin{proof}
Note that without loss of generality, we can assume that every two-qubit Haar-random gate is preceded by Haar-random single-qubit gates on both incoming lines, since the two situations correspond to the same ensemble.
When the noise is heralded, the circuit consists of three kinds of two-qubit gates:
\begin{enumerate}
\item Type A, where the two-qubit gate is noiseless.
\item Type B, where one of the two outgoing legs of the gate undergoes dephasing, followed by a single-qubit random gate.
\item Type C, where both outgoing legs undergo dephasing followed by single-qubit gates on both legs.
\end{enumerate}

We will analyze each of these types separately and start with a brief review of the methods of Ref.~\cite{Dalzell2020} for noiseless random circuits.
In Ref.~\cite{Dalzell2020}, it was shown that taking a two copies of an $n$-qubit state $\ketbra{0^n}^{\otimes 2}$, acted on by a Haar random circuit $U\otimes U$, and averaged over the unitaries leads to a density matrix that can be represented by a linear combination of length-$n$ configurations in $\{I, S\}^n$ where $I$ and $S$ are the $4\times 4$ identity and SWAP matrices. Any two-qubit unitary gate takes a linear combination to another linear combination. More precisely, for $\vec{\gamma} \in \{I, S\}^{ n}$, a two-qubit unitary channel $M_{U_2}$, acting on qubits $i$ and $j$, transforms it to
\begin{equation}
 M_{U_2}[\vec{\gamma}] = M_{U_2}\left[ \bigotimes_{a=1}^n \gamma_a\right] = \sum_{\vec{\nu} \in \{I, S\}^{ n}} M_{U_2}^{\vec{\gamma},\vec{\nu}} \bigotimes_{b=1}^n\nu_b =: \sum_{\vec{\nu} \in \{I, S\}} M_{U_2}^{\vec{\gamma},\vec{\nu}} \vec{\nu},
\end{equation}
where $M_{U_2}^{\vec{\gamma},\vec{\nu}}$ are matrix elements determined by qubit locations $i,j$
\begin{equation}
M_{U_2}^{\vec{\gamma},\vec{\nu}} =  \begin{cases} 1 & \text{ if } \gamma_i = \gamma_j\text{ and }\vec{\gamma}=\vec{\nu}, \\ 
2/5 & \text{ if } \gamma_i \neq \gamma_j \text{ and } \nu_i=\nu_j \text{ and } \gamma_k = \nu_k ~\forall k \in [n]/[i,j], \\ 
    0 & \text{ otherwise}.
\end{cases}
    \label{eq:transition-matrix}
\end{equation}
Therefore, a state can be represented as a linear combination of trajectories of the configuration strings, with each trajectory weighted according to \cref{eq:transition-matrix}. Furthermore, since $\Tr(\vec{\gamma} \ketbra{0^n}^{\otimes 2})=1$ for each $\vec{\gamma} \in \{I, S\}^n$, the collision probability can be, similarly, written as a sum over weighted trajectories. More precisely, the average collision probability of a circuit with $s$ gates,
\begin{equation}
    \mathbb{E}_{\mathcal{B}_L}[Z_s] = \frac{1}{3^n} \sum_{\gamma \in \{I, S\}^{n\times s}}\prod_{t=1}^{s-1}M_{U_2}^{\vec{\gamma}_t, \vec{\gamma}_{t+1}} = \frac{1}{3^n} \sum_{\gamma \in \{I, S\}^{n\times s}}\prod_{t=1}^{s-1} \text{wt}(\gamma).
    \label{eq:collision-weights-decomposition}
\end{equation}
In the above, the factor $1/3^n$ comes from the fact that after the first layer of Haar-random single-qubit gates, the Haar-averaged two-copy state is given by $\frac{1}{6^n} \sum_{\vec{\gamma}\in \{I,S\}^n} \vec{\gamma}$, the uniform mixture of all configurations in $\{I,S\}^n$.
Also, the weight $ \text{wt}(\gamma)$ of a configuration is defined as the product of the matrix elements $M_{U_2}^{\vec{\gamma}_t, \vec{\gamma}_{t+1}}$.

Now, we modify this construction to account for noise.
We add one more gate of Type $A$, $B$ or $C$ to this circuit. Since all three types are two-qubit gates, we let $[i,j]$ denote the qubits the gate acts on. We can isolate the qubits $[i,j]$ from the decomposition in \cref{eq:collision-weights-decomposition} as follows:
\begin{equation}
    \mathbb{E}_{\mathcal{B}_L}[Z_s] = \frac{1}{3^n}\left[ \sum_{\substack{\gamma \in \{I, S\}^{ns} \\ \vec{\gamma}^s_{ij}=II}} \text{wt}(\gamma) +  \sum_{\substack{\gamma \in \{I, S\}^{ns} \\ \vec{\gamma}^s_{ij}=IS}} \text{wt}(\gamma) + \sum_{\substack{\gamma \in \{I, S\}^{ns)} \\ \vec{\gamma}^s_{ij}=SI}} \text{wt}(\gamma)+\sum_{\substack{\gamma \in \{I, S\}^{ns} \\ \vec{\gamma}^s_{ij}=SS}} \text{wt}(\gamma)\right].
\end{equation}
\paragraph{Type A:} When we add a noiseless two-qubit gate, the bit-strings transform according to \cref{eq:transition-matrix}. Zooming on qubits $i$ and $j$, the trajectories evolve as follows:
\begin{equation}
   M_{U_2}[II] =II \qquad M_{U_2}[SS] =SS  \qquad M_{U_2}[IS, SI] =\frac{2}{5}(II + SS).
\end{equation}
The trajectories for which $\vec{\gamma}_{ij}^s \in \{II, SS\}$  have their weights unchanged. The trajectories for which $\vec{\gamma}_{ij}^s \in \{IS, SI\}$ have their weights changed by $4/5$ (the trajectory splits two ways, each weighted by $2/5$).
\begin{equation}
       \mathbb{E}_{\mathcal{B}_L}[Z_{s+1}] = \frac{1}{3^n} \left[ \sum_{\substack{\gamma \in \{I, S\}^{n(s+1)} \\ \vec{\gamma}^s_{ij}=II}} \text{wt}(\gamma) +  \frac{4}{5}\sum_{\substack{\gamma \in \{I, S\}^{n(s+1)} \\ \vec{\gamma}^s_{ij}=IS}} \text{wt}(\gamma) + \frac{4}{5}\sum_{\substack{\gamma \in \{I, S\}^{n(s+1)} \\ \vec{\gamma}^s_{ij}=SI}} \text{wt}(\gamma)+\sum_{\substack{\gamma \in \{I, S\}^{n(s+1)} \\ \vec{\gamma}^s_{ij}=SS}} \text{wt}(\gamma)\right].
\end{equation}
If, instead, we consider a modified random circuit where the two-qubit gate consists of a SWAP gate or identity with probability 1/2 followed by Haar random single-qubit gates, all the trajectories retain their original weights since the collision probability is invariant under a SWAP gate and $M_{U_1}[I]=I$ and $M_{U_1}[S]=S$. Denoting the locally modified circuit ensemble with the same set of noise locations by $\mathcal{B}_L'$, we have, in both cases,
\begin{equation}
    \mathbb{E}_{\mathcal{B}_L'}[Z_{s+1}] = \mathbb{E}_{\mathcal{B}_L}[Z_s] \implies \mathbb{E}_{\mathcal{B}_L'}[Z_{s+1}] > \mathbb{E}_{\mathcal{B}_L}[Z_{s+1}].
\end{equation}
\begin{align}
\begin{tikzcd}[transparent,row sep=0.7em]
& \gate[5]{C} & \qw & \qw \\
& \qw & \qw & \qw \\
& \qw& \gate[2]{U_2} & \qw\\
&\qw & & \qw \\
& \qw & \qw & \qw
\end{tikzcd}
\to
\frac{1}{2}
\begin{tikzcd}[transparent,row sep=0.7em]
& \gate[5]{C} & \qw & \qw \\
& \qw & \qw & \qw \\
& \qw& \gate[]{U_1} & \qw\\
&\qw & \gate[]{U_1} & \qw \\
& \qw & \qw & \qw
\end{tikzcd}
+
\frac{1}{2}
\begin{tikzcd}[transparent,row sep=0.7em]
& \gate[5]{C} & \qw & \qw & \qw \\
& \qw & \qw & \qw & \qw\\
& \qw & \gate[swap]{} & \gate{U_1} & \qw\\
&\qw & & \gate{U_1} &\qw \\
& \qw & \qw &\qw & \qw
\end{tikzcd}
\end{align}


\paragraph{Type B:} The gate of Type B has a noiseless two-qubit gate followed by dephasing on one of the outgoing legs. The dephasing is also followed by a single-qubit random gate. To simplify things, we first understand the effect of the channel $M_{U_1}\otimes \mathcal{E}$ on $I$ and $S$. Of course, $M_{U_1}\circ \mathcal{E}[I]=I$, since neither the error nor the random gate has any effect on the identity matrix. However, for $S$, we get
\begin{align}
    M_{U_1}^{(i)}\circ \mathcal{E}^{(i)}[S] &= M_{U_1}[(1-q)^2 S + q(1-q)(IZ)S(IZ) + q(1-q)(ZI)S(ZI) + q^2 S], \\
    &= \alpha I + \beta S,
\end{align}
with $\alpha = 4q(1-q)/3, \beta = 1-8q(1-q)/3$. Without losing generality, we assume that the dephasing happens on gate $i$, and the dephasing channel is denoted by $\mathcal{E}$. We now tabulate the effect of this composite channel:
\begin{align}
    M_{U_1}^{(i)}\circ \mathcal{E}^{(i)}\circ M_{U_2}[II] &= II, \\ 
    M_{U_1}^{(i)}\circ \mathcal{E}^{(i)}\circ M_{U_2}[SS] &=  \alpha IS + \beta SS, \\
    M_{U_1}^{(i)}\circ \mathcal{E}^{(i)}\circ M_{U_2}[IS, SI] &= M_{U_1}^{(i)}\left[\frac{2}{5}(II + SS)\right] = \frac{2}{5}(II + \alpha IS + \beta SS).
\end{align}
The average collision probability of the new circuit is given by
\begin{equation}
       \mathbb{E}_{\mathcal{B}_L}[Z_{s+1}] = \frac{1}{3^n} \left[ \sum_{\substack{ \vec{\gamma}^s_{ij}=II}} \text{wt}(\gamma) +  \frac{2}{5}(1+\alpha + \beta)\left(\sum_{\substack{ \vec{\gamma}^s_{ij}=IS}} \text{wt}(\gamma) + \sum_{\substack{ \vec{\gamma}^s_{ij}=SI}} \text{wt}(\gamma)\right)+(\alpha + \beta) \sum_{\substack{\vec{\gamma}^s_{ij}=SS}} \text{wt}(\gamma)\right].
\end{equation}
Using the same locally modified circuit ensemble as above, we obtain
\begin{align}
    M_{U_1}^{(i)}\circ \mathcal{E}^{(i)}\circ M_{U_2} &\to  \frac{1}{2}M_{U_1}^{(i)}\circ \mathcal{E}^{(i)}\circ M_{U_1}^{(i)} \circ M_{U_1}^{(j)} +  \frac{1}{2}M_{U_1}^{(i)}\circ \mathcal{E}^{(i)}\circ M_{U_1}^{(i)}\circ M_{U_1}^{(j)} \circ \text{SWAP} \\
    & \quad =  \frac{1}{2}M_{U_1}^{(i)}\circ \mathcal{E}^{(i)}\circ M_{U_1}^{(i)} \circ M_{U_1}^{(j)} +  \frac{1}{2}M_{U_1}^{(j)}\circ \mathcal{E}^{(j)}\circ M_{U_1}^{(j)} \circ M_{U_1}^{(i)}.
\end{align}
Under this new composite channel, the bit-strings evolve as follows:
\begin{equation}
    II \to II, \qquad SS \to \beta SS + \frac{\alpha}{2}(IS + SI), \qquad IS \to \frac{1}{2}(IS + \alpha II + \beta IS), \qquad SI \to \frac{1}{2}(\alpha II + \beta SI + SI).
\end{equation}
The collision probability of the modified circuit is given by
\begin{equation}
       \mathbb{E}_{\mathcal{B}_L'}[Z_s] = \frac{1}{3^n} \left[ \sum_{\substack{ \vec{\gamma}^s_{ij}=II}} \text{wt}(\gamma) +  \frac{1}{2}(1+\alpha + \beta)\left(\sum_{\substack{ \vec{\gamma}^s_{ij}=IS}} \text{wt}(\gamma) + \sum_{\substack{ \vec{\gamma}^s_{ij}=SI}} \text{wt}(\gamma)\right)+(\alpha + \beta)\sum_{\substack{ \vec{\gamma}^s_{ij}=SS}} \text{wt}(\gamma)\right].
\end{equation}
Since $2/5 < 1/2$, we have $Z_{s+1}' > Z_{s+1}$.

\begin{align}
\begin{tikzcd}[transparent,row sep=0.7em]
& \gate[5]{C} & \qw & \qw & \qw & \qw \\
& \qw & \qw &\qw & \qw & \qw \\
& \qw& \gate[2]{U_2} & \gate{\mathcal{E}} & \gate{U_1} & \qw\\
&\qw & & \qw & \qw & \qw \\
& \qw & \qw &\qw &\qw & \qw
\end{tikzcd}
\to
\frac{1}{2}
\begin{tikzcd}[transparent,row sep=0.7em]
& \gate[5]{C} & \qw & \qw & \qw & \qw \\
& \qw & \qw &\qw & \qw & \qw \\
& \qw& \gate{U_1} & \gate{\mathcal{E}} & \gate{U_1} & \qw\\
&\qw & \gate{U_1} & \qw & \qw & \qw \\
& \qw & \qw &\qw &\qw & \qw
\end{tikzcd} 
+ \frac{1}{2}
\begin{tikzcd}[transparent,row sep=0.7em]
& \gate[5]{C} & \qw & \qw & \qw & \qw & \qw \\
& \qw & \qw & \qw &\qw & \qw & \qw \\
& \qw &\gate[swap]{} &\gate{U_1} & \gate{\mathcal{E}} & \gate{U_1} & \qw\\
&\qw & &\gate{U_1} & \qw & \qw & \qw \\
& \qw & \qw &\qw &\qw & \qw &\qw
\end{tikzcd}
\end{align}


\bigskip

\paragraph{Type C:} These gates have dephasing noise on both legs. The noise is followed by single-qubit gates for both legs. The combined channel has the form $M_{U_1}^{(j)}\circ \mathcal{E}^{(j)} \circ M_{U_1}^{(i)} \circ \mathcal{E}^{(i)} \circ M_{U_2}$. The bit-strings evolve as follows:
\begin{align}
    II \to II , \qquad SS \to \alpha^2 II + \alpha\beta IS + \alpha\beta SI + \beta^2 SS, \qquad IS \to \frac{2}{5} ( (1+\alpha^2) II + \alpha\beta IS + \alpha\beta SI + \beta^2 SS).
\end{align}
The average collision probability is thus given by
\begin{equation}
       \mathbb{E}_{\mathcal{B}_L}[Z_{s+1}] = \frac{1}{3^n} \left[ \sum_{\substack{ \vec{\gamma}^s_{ij}=II}} \text{wt}(\gamma) +  \frac{2}{5}\left(1+(\alpha + \beta)^2\right)\left(\sum_{\substack{ \vec{\gamma}^s_{ij}=IS}} \text{wt}(\gamma) + \sum_{\substack{ \vec{\gamma}^s_{ij}=SI}} \text{wt}(\gamma)\right)+(\alpha + \beta)^2 \sum_{\substack{ \vec{\gamma}^s_{ij}=SS}} \text{wt}(\gamma)\right].
\end{equation}

\begin{align}
\begin{tikzcd}[transparent,row sep=0.7em]
& \gate[5]{C} & \qw & \qw & \qw & \qw \\
& \qw & \qw &\qw & \qw & \qw \\
& \qw& \gate[2]{U_2} & \gate{\mathcal{E}} & \gate{U_1} & \qw\\
&\qw & & \gate{\mathcal{E}} & \gate{U_1} & \qw \\
& \qw & \qw &\qw &\qw & \qw
\end{tikzcd}
\to
\frac{1}{2}
\begin{tikzcd}[transparent,row sep=0.7em]
& \gate[5]{C} & \qw & \qw & \qw & \qw \\
& \qw & \qw &\qw & \qw & \qw \\
& \qw& \gate{U_1} & \gate{\mathcal{E}} & \gate{U_1} & \qw\\
&\qw & \gate{U_1} & \gate{\mathcal{E}} & \gate{U_1} & \qw \\
& \qw & \qw &\qw &\qw & \qw
\end{tikzcd} 
+ \frac{1}{2}
\begin{tikzcd}[transparent,row sep=0.7em]
& \gate[5]{C} & \qw & \qw & \qw & \qw & \qw \\
& \qw & \qw & \qw &\qw & \qw & \qw \\
& \qw &\gate[swap]{} &\gate{U_1} & \gate{\mathcal{E}} & \gate{U_1} & \qw\\
&\qw & &\gate{U_1} & \gate{\mathcal{E}} & \gate{U_1} & \qw \\
& \qw & \qw &\qw &\qw & \qw &\qw
\end{tikzcd} 
\end{align}

%

If instead we replace the two-qubit gate with two single-qubit Haar random gates preceded by a SWAP gate with probability 1/2 (as shown in the diagram above), we get channels of the form $M_{U_1}^{(j)}\circ \mathcal{E}^{(j)} \circ \mathcal{E}^{(i)} \circ M_{U_1}^{(j)} \circ   \mathcal{E}^{(i)} \circ M_{U_1}^{(i)} \circ \frac{1}{2}(SWAP+\mathbb{I})$, which, up to the SWAP gate, is same as the composite channel in Lemma \ref{lem:single-qubit} applied to both qubits. The states evolve as
\begin{equation}
    II \to II , \qquad SS \to \alpha^2 II + \alpha\beta IS + \alpha\beta SI + \beta^2 SS, \qquad IS \to \alpha II + \frac{\beta}{2} (IS+SI), \qquad SI \to \alpha II + \frac{\beta}{2} (SI+IS).
\end{equation}
The collision probability of the modified circuit is thus
\begin{equation}
       \mathbb{E}_{\mathcal{B}_L'}[Z_{s+1}] = \frac{1}{3^n} \left[ \sum_{\substack{ \vec{\gamma}^s_{ij}=II}} \text{wt}(\gamma) +  (\alpha + \beta)\left(\sum_{\substack{ \vec{\gamma}^s_{ij}=IS}} \text{wt}(\gamma) + \sum_{\substack{ \vec{\gamma}^s_{ij}=SI}} \text{wt}(\gamma)\right)+(\alpha + \beta)^2 \sum_{\substack{ \vec{\gamma}^s_{ij}=SS}} \text{wt}(\gamma)\right].
\end{equation}
Since $(\alpha + \beta) = 1+4q(1-q)/3$, $\alpha + \beta \in [1,4/3] \subset (\frac{1}{2}, 2)$, we have that $(2/5)\left(1+(\alpha + \beta)^2\right)< (\alpha + \beta)$, and therefore $Z_{s+1}' > Z_s$.

Starting from the input state, we can use the replacement procedure discussed above to iteratively define a  new circuit ensemble $\mathcal{B}_L'$ with the gates drawn at random independently of $L$, composed solely of single-qubit gates and SWAP gates, that has an equal or higher gate-averaged collision probability. This concludes the proof of the lemma.
\end{proof}

\section{Proof of \cref{lem_smallvariance}} \label{apx_variance}
We restate the lemma here for reference.
\setcounter{lemma}{2}
\begin{lemma}
The variance $\sigma^2 := \mathbb{E}_\mathcal{B}[x^2]-\mathbb{E}_\mathcal{B}[x]^2$ of the random variable $x := n\log 2 + \frac{1}{n!} \sum_{\sigma \in S_n} A_{\sigma} + \frac{1}{4 L_{\max}(d)}\sum_j  \langle Z_j \rangle^2$ in \cref{eq_lowerbound} satisfies $\sigma^2 \leq 2n$.
\end{lemma}

\begin{proof}[Proof of \cref{lem_smallvariance}]
The variance is
\begin{align}
\mathbb{E}_\mathcal{B}[x^2] - \mathbb{E}_\mathcal{B}[x]^2 = \frac{1}{n!^2} \mathbb{E}_\mathcal{B}\left[\left(\sum_\sigma A_\sigma\right)^2 \right] + \frac{1}{2L_\mathrm{max}(d)n!}\mathbb{E}_\mathcal{B}\left[\sum_\sigma A_\sigma \sum_j \expval{Z_j}^2\right] + \frac{1}{16 L_\mathrm{max}(d)^2}\mathbb{E}_\mathcal{B}\left[\sum_{i,j} \expval{Z_i}^2 \expval{Z_j}^2 \right] \nonumber \\
- \frac{1}{16L_\mathrm{max}(d)^2}\mathbb{E}_\mathcal{B}\left[\sum_i \expval{Z_i}^2\right]\mathbb{E}_\mathcal{B}\left[\sum_j \expval{Z_j}^2\right]. \label{eq_variance}
\end{align}

We will use the covariance bound
\begin{align}
\mathbb{E}[X Y] &= \sum_i \sqrt{p_i}x_i \sqrt{p_i} y_i
\\ &\leq \sqrt{\left(\sum_i p_i x_i^2\right)\left(\sum_j p_j y_j^2\right)}
\\ &= \sqrt{\mathbb{E}\left[X^2\right] \mathbb{E}\left[Y^2\right]}.
\end{align}
Applying this to the first term gives
\begin{align}
\frac{1}{n!^2} \mathbb{E}_\mathcal{B}\left[\left(\sum_\sigma A_\sigma\right)^2 \right] = \frac{1}{n!^2} \sum_{\sigma, \tau} \mathbb{E}_\mathcal{B}\left[A_\sigma A_\tau \right]
\\ \leq \frac{1}{n!^2} \sum_{\sigma,\tau} \sqrt{\mathbb{E}_\mathcal{B}[A_\sigma^2]\mathbb{E}_\mathcal{B}[A_\tau^2]}.
\end{align}
Now consider
\begin{align}
\mathbb{E}_\mathcal{B}\left[A_\sigma^2\right] &= \mathbb{E}_\mathcal{B}\left[\sum_{i,j} \expval{Z_{\sigma(i)}}_{\sigma(1)\ldots \sigma(i-1)} \expval{Z_{\sigma(j)}}_{\sigma(1)\ldots \sigma(j-1)} \right]
\\ &= \mathbb{E}_\mathcal{B}\left[\sum_{i} \expval{Z_{\sigma(i)}}_{\sigma(1)\ldots \sigma(i-1)}^2 \right] +  2 \mathbb{E}_\mathcal{B}\left[\sum_{i<j} \expval{Z_{\sigma(i)}}_{\sigma(1)\ldots \sigma(i-1)} \expval{Z_{\sigma(j)}}_{\sigma(1)\ldots \sigma(j-1)} \right]
\\ & \leq n
\end{align}
because the latter term is 0 (since $\sigma(j) \notin \{\sigma(1),\ldots \sigma(i)\}$ because $j>i$).
Therefore
\begin{align}
\frac{1}{n!^2} \mathbb{E}_\mathcal{B}\left[\left(\sum_\sigma A_\sigma\right)^2 \right] \leq \frac{1}{n!^2} \sum_{\sigma,\tau} \sqrt{n\cdot n} = n.
\end{align}

The next term is
\begin{align}
\frac{1}{2L_\mathrm{max}(d)n!}\mathbb{E}_\mathcal{B}\left[\sum_\sigma A_\sigma \sum_j \expval{Z_j}^2\right] = \frac{1}{2L_\mathrm{max}(d)n!}\mathbb{E}_\mathcal{B}\left[\sum_\sigma \sum_i \expval{Z_{\sigma(i)}}_{\sigma(1)\ldots \sigma(i-1)} \sum_j \expval{Z_j}^2\right].
\end{align}
The expectation is zero whenever $\sigma(i) \neq j$ or $j \notin \{\sigma(1),\ldots \sigma(i)\}$.
The latter condition is superseded by the first.
The only (potentially) nonzero term is when $\sigma(i)=j$.
But this term is zero as well since
\begin{align}
\mathbb{E}_\mathcal{B}[\expval{Z_j}_{J_j}\expval{Z_j}^2] = 0
\end{align}
because the 3-copy Haar-average is
\begin{align}
\int \mathrm{d}U
\begin{tikzcd}[row sep={0.8cm,between origins},transparent]
\qw & \gate{U} & \gate[3]{ZZZ} & \gate{U^\dagger} & \qw\\
\qw & \gate{U} & \qw & \gate{U^\dagger} & \qw\\
\qw & \gate{U} & \qw & \gate{U^\dagger} & \qw
\end{tikzcd}
= 0.
\end{align}
This brings us to the final term
\begin{align}
\frac{1}{16 L_\mathrm{max}(d)^2}\mathbb{E}_\mathcal{B}\left[\sum_{i,j} \expval{Z_i}^2 \expval{Z_j}^2 \right] - \frac{1}{16L_\mathrm{max}(d)^2}\mathbb{E}_\mathcal{B}\left[\sum_i \expval{Z_i}^2\right]\mathbb{E}\left[\sum_j \expval{Z_j}^2\right]
\\ = \frac{1}{16 L_\mathrm{max}(d)^2} \sum_{i,j} \left(\mathbb{E}_\mathcal{B}\left[\expval{Z_i}^2 \expval{Z_j}^2 \right] - \mathbb{E}_\mathcal{B}\left[\expval{Z_i}^2\right]\mathbb{E}_\mathcal{B}\left[\expval{Z_j}^2\right]\right).
\end{align}
For a fixed $i$, $j$, the term $\mathbb{E}_\mathcal{B}\left[\expval{Z_i}^2 \expval{Z_j}^2 \right] - \mathbb{E}_\mathcal{B}\left[\expval{Z_i}^2\right]\mathbb{E}_\mathcal{B}\left[\expval{Z_j}^2\right]$ is zero unless $j$ is in $L_d \circ L_d^\dag(i)$.
Otherwise, it is at most 1.
This means the sum is at most
\begin{align}
\frac{1}{16 L_\mathrm{max}(d)^2} \sum_{i} \abs{L_d \circ L_d^\dag(i)}
\\ \leq \frac{n}{16 L_\mathrm{max}(d)} \leq n.
\end{align}
The variance in \cref{eq_variance} is therefore at most $2n$.
\end{proof}
\end{document}